\documentclass[11pt]{article}
\usepackage[paper=a4paper, margin=1in]{geometry}

\usepackage[T1]{fontenc}

\usepackage{graphicx}

\usepackage{amssymb}
\usepackage{amsmath}
\usepackage{amsthm}
\usepackage{appendix}
\usepackage{booktabs}
\usepackage{url}
\usepackage[linesnumbered,ruled,vlined]{algorithm2e}

\newtheorem{theorem}{Theorem}
\newtheorem{lemma}[theorem]{Lemma}  
\newtheorem{definition}{Definition}

\begin{document}

\title{
Prediction-Augmented Mechanism Design for Weighted Facility Location\thanks{An extended abstract of this paper is to appear in the 19th Annual Conference on Theory and Applications of Models of Computation (TAMC 2025).}
}

\author{
Yangguang Shi\thanks{School of Computer Science and Technology, Shandong University, Qingdao, China. \texttt{shiyangguang@sdu.edu.cn}. Department of Economics, University of Bath, Claverton Down, Bath, UK. ORCID: 0000-0003-2923-2748}
\and
Zhenyu Xue\thanks{School of Computer Science and Technology, Shandong University, Qingdao, China. \texttt{xuezhenyu@mail.sdu.edu.cn}. ORCID: 0009-0004-0868-2488}
}

\date{}

\maketitle 
\begin{abstract}
Facility location is fundamental in operations research, mechanism design, and algorithmic game theory, with applications ranging from urban infrastructure planning to distributed systems. Recent research in this area has focused on augmenting classic strategyproof mechanisms with predictions to  achieve an improved performance guarantee against the uncertainty under the strategic environment. Previous work has been devoted to address the trade-off obstacle of balancing the consistency (near-optimality under accurate predictions) and robustness (bounded inefficiency under poor predictions) primarily in the unweighted setting, assuming that all agents have the same importance. However, this assumption may not be true in some practical scenarios, leading to research of weighted facility location problems. 

The major contribution of the current work is to provide a prediction augmented algorithmic framework for balancing the consistency and robustness over strategic agents with non-uniform weights. In particular, through a reduction technique that identifies a subset of \emph{representative} instances and maps the other given locations to the representative ones, we prove that there exists a \emph{strategyproof} mechanism achieving a bounded consistency guarantee of $\frac{\sqrt{(1+c)^2W^2_{\min}+(1-c)^2W^2_{\max}}}{(1+c)W_{\min}}$ 
 and a bounded robustness guarantee of $\frac{\sqrt{(1-c)^2W^2_{\min}+(1+c)^2W^2_{\max}}}{(1-c)W_{\min}}$ 
in weighted settings, where $c$ can be viewed as a parameter to make a trade-off between the consistency and robustness and $W_{\min}$ and $W_{\max}$ denote the minimum and maximum agents' weight. We also prove that there is no strategyproof deterministic mechanism that reach $1$-consistency and  $O\left( n \cdot \frac{W_{\max}}{W_{\min}} \right)$-robustness in weighted FLP, even with fully predictions of all agents.

\vspace{1em}
\noindent\textbf{Keywords: }Weighted Facility Location, Prediction-Augmented Mechanism, Consistency and Robustness Trade-off, Upper Bound.
\end{abstract}

\newpage

\section{Introduction}
Facility Location Problem (FLP) is a classic problem in operations research, computer science, and economics \cite{chan2021mechanism}. 
Its goal is to choose one or more facility locations within a given space to optimize a global cost function between users (referred to as agents) and facilities. If only one facility needs to place, we say it single FLP\footnote{If not mentioned on purpose, in the following text, we say the \emph{FLP} as single FLP by default.}. In the problem, there are $n$ agents in a metric space $\mathcal{M}$, and each agent $i$ has her preferred location $p_{i}$ and must report a location $\hat{p}_i \in \mathcal{M}$ of the facility. We are required to collect this information to choose where to locate the facility $f \in \mathcal{M}$. Once we choose the facility location $f$, the individual cost of each agent $i$ is given by the distance $d(p_{i}, f)$ between $p_{i}$ and $f$, where $d(\cdot)$ is the distance function defined on $\mathcal{M}$. Our objective is to optimize a global function $g(P; f)$, where $P$ is the location profile of all the agents. Naturally, here comes to a question: \emph{What kind of global cost function we should consider?} Motivated by various applications in practical scenarios (see \cite{chan2021mechanism} for a more detailed overview), researchers in this area are often interested in the global objectives of (1) \emph{egalitarian social cost} and (2) \emph{utilitarian social cost}, where \emph{egalitarian social cost} is to the maximum cost over all the agents, and \emph{utilitarian social cost} is the average cost \cite{agrawal2022learning,good2020coordinate}. The study of the former is more difficult as applying predictions on any existing strategyproof mechanisms is hard, for the mechanisms focus on the bounding box of $P$ that prediction can hardly get involved. So the current work is devoted to the study of the latter. 

Under our setting, agents are selfish, so an agent $i$ may misreport her preferred location to minimize her individual cost, which means the reported location $\hat{p}_i$ is not necessarily equivalent to her real preferred location $p_i$. This leads to the \emph{strategic facility location problem}, where the focus shifts from purely optimizing a cost function to designing mechanisms that are strategyproof, i.e., mechanisms in which agents have no incentive to tell lie about their true preferences \cite{procaccia2013approximate,brandt2016handbook}. This problem is first studied by Procaccia and Tennenholtz \cite{procaccia2013approximate}, focusing on the egalitarian social cost, and Dekel \cite{dekel2010incentive} investigates how to optimize the utilitarian social cost. Procaccia and Tennenholtz \cite{procaccia2013approximate} prove that no deterministic and strategyproof mechanism can achieve better than a 2-approximation, when the goal is the egalitarian social cost, and Meir \cite{meir2019strategyproof} propose the coordinate median (CM) mechanism that reaches $\sqrt{n}$-approximation in $n$-dimensional case while optimizing the utilitarian social cost. These results establish the framework for subsequent research.

Research work on FLP has been applied in establishment of public services such as schools and hospitals \cite{jamshidi2009median,daskin1997network}. Existing work is often based on the unweighted assumption on the agents, which means that all the agents in the environment are equivalent in importance. Although this assumption reflects equal treatment in public services, this setting is uncommon in reality, where agents often differ in importance. Therefore, the unweighted assumption rarely holds in practice. For example, in the field of healthcare, existing work suggests that the older adults are often associated with a relatively higher weights \cite{ahmadi2017survey,micc2019healthcare}. Agents with distinct importance (weights) also exist in other applications including logistics planning \cite{boonmee2017facility}, disaster assistance \cite{carnero2025temporary}, etc. Obvliously, the unweighted setting of FLP cannot be adapted to these environments. That is why strategic FLP with non-uniform weights has attracted the attention of researchers \cite{zhang2014strategyproof,todo2011false}. Under the weighted setting, we assume that different weight to quantifies her importance to the decision maker: agents with higher weights take greater social cost and vice versa. 

Existing work on strategic FLP (unweighted or weighted) usually adopts the research paradigm of worst-case analysis. This paradigm focuses on extreme scenarios that rarely occur in practice, leading to algorithm design that is overly conservative for the average case. On the other hand, average-case analysis must deal with the uncertainty in real-world problems, such as the lack of knowledge about the true distribution of agent's preferred location in FLP. To address this issue, a novel research paradigm, \emph{prediction-augmented algorithm design} has recently emerged. It allows decision-makers to access external prediction sources to obtain forecasts about the uncertainty in the problem setting, and allows them to decide how much they trust the forecasts \cite{feldman2013strategyproof,escoffier2011strategy,meir2019strategyproof,zhang2014strategyproof}. When the prediction is accurate, we can trust it more and get a better resulting guarantee than using worst-case analysis, i.e., \emph{consistency} of the mechanism. On the other hand, when the prediction is arbitrarily wrong, we would like to ensure that the wrong prediction does not worsen the result too much and obtains a solid performance guarantee, i.e., \emph{robustness} of the mechanism \cite{lykouris2018competitive}. More precisely speaking, in the framework of prediction-augmented mechanism design of strategic FLP, given the agents' reported locations $P$, the designer is provided a predicted location $\hat{o}$, regarding the real optimal location $o(P)$. The designer can use the information to make a mechanism, whose result is $f(P, \hat{o})$. Given a social cost function $C$, the mechanism is $\alpha$-consistent when the prediction is accurate $(\hat{o} = o(P))$ if 
\begin{equation*}
    \max_P{\left\lbrace\frac{C(f(P, o(P)), P)}{C(o(P), P)}\right\rbrace} \leq \alpha
\end{equation*}
and $\beta$-robust even when the prediction is arbitrarily wrong if 
\begin{equation*}
    \max_{P, \hat{o}}{\left\lbrace\frac{C(f(P, \hat{o}), P)}{C(o(P), P)}\right\rbrace} \leq \beta.
\end{equation*}

\subsection{Our Contributions}

We investigate the prediction-augmented mechanism design of strategic FLP in a weighted setting, where the importance of the agents is non-uniform according to the weighted values. Our goal is to design deterministic, strategyproof mechanisms that incorporate a prediction of the optimal facility location, and achieve a tunable trade-off between consistency and robustness. 

Our contributions are as follows:

\begin{theorem}\label{thm1}
    There exists a deterministic and strategyproof mechanism in the weighted FLP with predictions that reaches $\frac{\sqrt{(1+c)^2W^2_{\min}+(1-c)^2W^2_{\max}}}{(1+c)W_{\min}}$-consistency and $\frac{\sqrt{(1-c)^2W^2_{\min}+(1+c)^2W^2_{\max}}}{(1-c)W_{\min}}$-robustness. 
\end{theorem}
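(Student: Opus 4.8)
Throughout we take $\mathcal{M}$ to be Euclidean space under the $\ell_2$ metric, which is the setting in which the coordinate-median approach and bounds of the stated $\sqrt{\cdot}$-type arise. The plan is to build the mechanism on top of the strategyproof weighted coordinate-median rule and to feed in the prediction $\hat{o}$ as an exogenous \emph{phantom} point whose pull is governed by the knob $c$. Concretely, the mechanism acts coordinate by coordinate: on each axis $j$ it outputs the weighted median of the multiset formed by the agents' reported $j$-th coordinates (with their true, public weights) together with $\hat{o}_j$ carrying a phantom weight equal to a fraction $\lambda(c)$ of the total agent weight, where $\lambda$ increases from $0$ at $c=0$ --- giving the pure weighted coordinate-median, our robust base rule --- to a value at which the phantom dominates and the output collapses onto $\hat{o}$ as $c\to 1$. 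This single parameter is what trades consistency against robustness.

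First I would verify strategyproofness. Because $\hat{o}$ is supplied by the predictor rather than reported by an agent, each coordinate of $f(P,\hat{o})$ is the weighted median of the agents' reported coordinates plus a \emph{fixed} point, hence a generalized weighted median and strategyproof on the line; the standard argument for the coordinate-median mechanism then lifts this to strategyproofness on $\mathcal{M}$, since an agent minimizing $d(p_i,\cdot)$ controls the coordinates of her report independently and gains nothing by misreporting any one of them.

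The heart of the proof is the consistency and robustness analysis via the representative-instance reduction. Stage one: given an arbitrary weighted instance $P$ on $n$ agents together with a prediction $\hat{o}$, show that the cost ratio $C(f(P,\hat{o}),P)/C(o(P),P)$ does not decrease if we replace $P$ by a canonical \emph{representative} instance --- all agent locations pushed onto a two-point, axis-aligned configuration and every weight moved to one of the extremes $W_{\min}$, $W_{\max}$ --- exploiting the triangle inequality together with the monotonicity and piecewise linearity of the weighted-median map in each coordinate and in the weights; this is the ``map the other locations onto the representative ones'' step, and it is exactly what makes the final bound free of $n$ and of the dimension. Stage two: on a representative instance the ratio is computed directly. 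The triangle inequality bounds the numerator $C(f(P,\hat{o}),P)$ by a sum of two \emph{orthogonal} contributions, one along the axis carrying the weight $W_{\min}$ of size proportional to $(1+c)W_{\min}$ (respectively $(1-c)W_{\min}$ in the robustness case) and one along the axis carrying $W_{\max}$ of size proportional to $(1-c)W_{\max}$ (respectively $(1+c)W_{\max}$), while $C(o(P),P)$ is the same common distance scaled by $(1+c)W_{\min}$ (respectively $(1-c)W_{\min}$); combining the two orthogonal pieces by Pythagoras and dividing gives $\frac{\sqrt{(1+c)^2W_{\min}^2+(1-c)^2W_{\max}^2}}{(1+c)W_{\min}}$ for consistency and $\frac{\sqrt{(1-c)^2W_{\min}^2+(1+c)^2W_{\max}^2}}{(1-c)W_{\min}}$ for robustness. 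As a sanity check, $c=1$ yields $1$-consistency and unbounded robustness, while $c=0$ yields the common value $\sqrt{W_{\min}^2+W_{\max}^2}/W_{\min}$ attained by the base rule.

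I expect stage one --- the reduction to representative instances --- to be the main obstacle, for two reasons. First, the transformation has to be shown cost-ratio non-decreasing \emph{simultaneously} in the locations and in the weights, and in the consistency regime it must be carried out so that the prediction stays optimal for the transformed instance ($\hat{o}=o(P)$ must be preserved), which couples the location collapse and the weight extremization with how the optimum $o(\cdot)$ moves. Second, the function $\lambda(c)$ (equivalently the phantom weight) must be calibrated so that the \emph{same} canonical family is extremal for both consistency and robustness and so that strategyproofness is never violated; obtaining one clean knob $c$ for which the two guarantees come out in the stated symmetric form is the delicate part, whereas once the representative instances are fixed the remaining estimate is the elementary Pythagorean computation sketched above.
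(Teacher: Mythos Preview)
Your plan departs from the paper at the very first step: the mechanism. The paper's CMP \emph{ignores the agent weights} --- it adds $\lfloor cn\rfloor$ unit-weight copies of $\hat{o}$ to $P$ and returns the ordinary coordinate-wise median; the $w_i$ enter only when one evaluates the cost $C^u$. You instead build on the \emph{weighted} coordinate median with a phantom of mass $\lambda(c)\sum_i w_i$ at $\hat{o}$. These are different rules, and the discrepancy is not cosmetic: the coefficients $(1\pm c)$ in the stated bounds are \emph{count} fractions that come from the unweighted median. With $cn$ phantoms at $\hat{o}=o(P)$, the output can sit at the origin only when at most $\tfrac{1-c}{2}\,n$ agents are at $(0,1)$ and at least $\tfrac{1+c}{2}\,n$ are at $(\pm x,0)$; the adversary then assigns $W_{\max}$ to the former group and $W_{\min}$ to the latter \emph{in the cost only}, producing the ratio $\dfrac{(1-c)W_{\max}+(1+c)xW_{\min}}{(1+c)\,W_{\min}\sqrt{1+x^2}}$, whose maximum over $x$ is exactly the claimed consistency (robustness is the same computation with $c\mapsto -c$). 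No calibration of $\lambda(c)$ in a weight-based median reproduces this arithmetic, because there the threshold for the median is governed by total weight, not head-count.

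Your own sanity check flags the mismatch. At $c=0$ your base rule is the weighted coordinate median, which minimises the weighted $\ell_1$ cost and is therefore $\sqrt{2}$-approximate for the weighted $\ell_2$ objective in $\mathbb{R}^2$ --- not $\sqrt{W_{\min}^2+W_{\max}^2}/W_{\min}$ as you assert; the $W_{\max}/W_{\min}$ dependence in the theorem is precisely the price the paper pays for using a weight-blind mechanism. The reduction you outline (push locations onto an axis-aligned two-cluster configuration, then extremise weights) is in spirit what the paper does: it maps any instance to a \emph{weighted COA} instance via a chain of lemmas adapted from Agrawal et al., and proves separately that loading $(0,1)$ with $W_{\max}$ and $(\pm x,0)$ with $W_{\min}$ is worst. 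But that pipeline is tied to the unweighted-median output. If you switch your mechanism to plain CMP (phantoms counted, weights ignored), your two-stage plan becomes essentially correct, with stage two the one-variable optimisation above rather than a Pythagorean estimate.
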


$W_{\min}$ and $W_{\max}$ denote the minimum and maximum weights of agents, respectively. The mechanism uses a parameter \( c \in [0, 1) \) to control the degree of reliance on the prediction. When \( c \) is large, the mechanism places more trust in the prediction, achieving better consistency; when \( c \) is small, it becomes more conservative, improving the robustness. In particular, we emphasize that our performance guarantee matches the existing results of Zhang and Li \cite{zhang2014strategyproof} by setting $c=0$. Also, when $\frac{W_{\max}}{W_{\min}} = 1$, meaning that all agents' weights are equal, our results matches those from Agrawal \cite{agrawal2022learning} on the unweighted setting. 

\begin{theorem}\label{thm2}
    There is no deterministic and strategyproof mechanism in the weighted FLP with predictions that is $1$-consistent and  $O\left( n \cdot \frac{W_{\max}}{W_{\min}} \right)$-robust.
\end{theorem}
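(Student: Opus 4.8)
The statement is an impossibility, so the plan is to fix an arbitrary deterministic strategyproof mechanism $f$ that is $1$-consistent and exhibit a reported profile together with a prediction on which its robustness ratio is $\Omega(n\cdot W_{\max}/W_{\min})$. Two observations drive the argument. First, $1$-consistency forces $f(Q,o(Q))=o(Q)$ whenever the supplied prediction equals the (unique) optimal facility $o(Q)$ of the reported profile $Q$, so the mechanism is pinned down on a rich family of ``anchor'' profiles. Second, for any fixed value $\hat o$ of the prediction the induced map $Q\mapsto f(Q,\hat o)$ is strategyproof in the ordinary sense — an agent's misreport changes the profile but not the externally supplied prediction — so for two profiles $Q,Q'$ that differ only in agent $i$'s report, say $p_i$ in $Q$ and $p_i'$ in $Q'$, and writing $y=f(Q,\hat o)$, $y'=f(Q',\hat o)$, both $d(p_i,y')\ge d(p_i,y)$ and $d(p_i',y)\ge d(p_i',y')$ hold; when $p_i$ and $p_i'$ are close this sandwiches $y'$ into a small region around $y$.

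The construction lives in a metric space $\mathcal M$ in which the weighted-optimal rule is itself not strategyproof — the Euclidean plane already suffices as one admissible choice of $\mathcal M$. Fix a ``misleading'' prediction $\hat o$ and a length scale $\Delta$. Begin from an anchor profile whose unique optimum is $\hat o$: place the single light agent (weight $W_{\min}$) at $\hat o$, and split the $n-1$ heavy agents (weight $W_{\max}$) into two equal, mirror-image clusters placed symmetrically about $\hat o$, so that the clusters' pulls cancel along the line through them and $\hat o$ is the unique minimizer of the weighted social cost; by the first observation, $f(\cdot,\hat o)$ is pinned to $\hat o$ on this profile. Then transform it, one agent relocation at a time, into a ``target'' profile in which the $n-1$ heavy agents are clustered at a point $H$ with $d(H,\hat o)=\Delta$ and the light agent sits near $\hat o$. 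The schedule of relocations is chosen so that the profile's optimum remains equal to $\hat o$ for as long as possible — so $1$-consistency re-pins $f$ at $\hat o$ after each such step — and so that the step that ultimately moves the optimum away from $\hat o$ is a small displacement of a single agent, in which case the strategyproofness sandwich confines $f$ to a tiny neighbourhood of $\hat o$; heavy agents are moved along routes that never pass through the facility location. In the target profile the optimum is $H$, its optimal cost is $\Theta(W_{\min}\Delta)$ because only the light agent is displaced from the heavy cluster, whereas the mechanism is forced to a point at distance $\Theta(\Delta)$ from $H$, which costs $\Theta\big((n-1)W_{\max}\Delta\big)$; hence the ratio is $\Omega(n\cdot W_{\max}/W_{\min})$. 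Minor adjustments handle the parity of the number of heavy agents and the boundary cases of small $n$ and $W_{\min}=W_{\max}$, where the bound degenerates to the unweighted $\Omega(n)$ lower bound.

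The crux — and the step I expect to require the most care — is making this transformation actually yield a large gap. Strategyproofness alone cannot pin the facility once an agent sits far from it, so the argument has to lean on $1$-consistency, which in turn forces the intermediate profiles to stay essentially balanced; the delicate design problem is to reach a concentrated, heavily suboptimal target configuration through a sequence of single-agent moves in which the optimum leaves $\hat o$ only at the very end and only by a controlled amount, uniformly in $n$ and in $W_{\max}/W_{\min}$, and without ever routing a heavy agent through the facility. Once the schedule is in place, the rest is bookkeeping with the triangle inequality and the social-cost expressions.
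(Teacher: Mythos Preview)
Your route differs fundamentally from the paper's, and the paper's is worth knowing because it sidesteps exactly the obstacle you flag as the crux. The paper does not attempt any chain of single-agent deviations. Instead it invokes the characterization (Corollary~3 of Peters et al.) that every deterministic, strategyproof, anonymous, unanimous mechanism on $\mathbb{R}^2$ is a generalized coordinate median with $n-1$ fixed phantom points; for each fixed prediction $\hat o$ the phantoms are determined, and --- this is the key --- the GCM output depends only on the reported \emph{locations}, not on the weights. The paper then takes two instances with the \emph{same} location multiset (four agents at $(0,10)$, one at $(0,20)$) but swapped weight assignments. In Instance~A the lone agent at $(0,20)$ is heavy, so the optimum is $(0,20)$; $1$-consistency with $\hat o=(0,20)$ forces all $n-1$ phantoms to have $y\ge 20$. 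In Instance~B the four agents at $(0,10)$ are heavy and the optimum is $(0,10)$, but the mechanism --- same $\hat o$, hence same phantoms, same locations --- still outputs $(0,20)$, giving ratio $(n-1)W_{\max}/W_{\min}$. No movement of agents is needed; the whole argument is a single weight swap.

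Your proposal, by contrast, has a genuine gap at the step you yourself single out. The strategyproofness ``sandwich'' $d(p_i,y')\ge d(p_i,y)$ and $d(p_i',y')\le d(p_i',y)$ only confines $y'$ to a thin annular region around $p_i$, not to a small ball around $y$: in $\mathbb{R}^2$ nothing prevents $y'$ from jumping to the antipodal point of the annulus. The one case where the sandwich does pin $y'$ near $y=\hat o$ is when the moving agent is itself at $\hat o$ --- but a small displacement of that single agent cannot shift the weighted optimum from $\hat o$ all the way to $H$. More structurally, there is no single-agent-move schedule that keeps the optimum at $\hat o$ until the penultimate profile and yet delivers all $n-1$ heavy agents concentrated at $H$ in the final one: once the bulk of the heavy mass is at $H$, the optimum is already there and consistency pins nothing. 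The paper's weight-swap trick avoids this entirely by never moving a location.
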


Theorem \ref{thm2} generalizes known impossibility results for the unweighted case and highlights how agents' weights fundamentally constrains achievable trade-offs. It indicates that the weighted case of facility location is more challenging than the unweighted one, as the performance guarantee of any mechanism designed for the weighted case can hardly get rid of the dependence on the input parameters $\frac{W_{\max}}{W_{\min}}$.

\subsection{Related Work}

\subsubsection{Strategic Facility Location.} Strategic FLP is a classic mechanism design problem that has been widely studied. For example, Procaccia and Tennenholtz \cite{procaccia2013approximate} research the approximate mechanism design without money and take this problem as an example. For single facility location in one dimension, there exists a deterministic mechanism that is 2-approximate in utilitarian social cost and optimal in maximum social cost by placing the facility on the medium of all agents. This mechanism also reaches the best approximate ratio among all deterministic and strategyproof mechanisms. Zhang and Li \cite{zhang2014strategyproof} expand this theory into weighted cases, and show that the coordinate medium mechanism achieves $\frac{W_{\max}}{W_{\min}}$-approximation in utilitarian social cost. Also, it is the best approximation ratio.

A lot of research work focus on this framework along with strategic FLP \cite{chen2024strategic,balkanski2024randomized}. Beyond the classic setting, extended researches include Obnoxious Facility Location (where agents seek maximum distance from facilities) \cite{lam2024proportional,drezner2020multiple} and Heterogeneous Facility Location (where facilities can be both desirable and undesirable, from the perspective of different agents). Results in this line of work allows agents to have different weights. However, the weight of each agent is constrained in a binary set $\lbrace -1, 1\rbrace$. In our work, the analysis is conducted precisely by removing this constraint.

\subsubsection{Prediction-Augmented Algorithms.}
Prediction-augmented algorithms (also known as the learning-augmented algorithms), is an algorithm design paradigm aiming to surpass the worst-case performance guarantee by allowing the algorithm to access predictions. Lykouris and Vassilvtiskii \cite{lykouris2021competitive} introduce consistency and robustness to analyze the performance of algorithms in this framework as the two primal metrics to measure a mechanism in this framework. After its release, many classic problems, like online paging problems \cite{lykouris2021competitive,jiang2022online}, scheduling problems \cite{purohit2018improving}, ski-rental problems \cite{wei2020optimal,wang2020online}, and secretary problems \cite{antoniadis2020secretary} have been researched in this framework. Facility location problems are also in the list. Fotakis and Gergatsouli \cite{fotakis2021learning} study online facility location problem with prediction, and Jiang and Liu \cite{jiang2021online} introduce dynamic prediction update to amend the prediction models. 

However, the problems mentioned above are not studied under a strategic setting. In strategic problems, the predictions focus on overcoming the information limitation of privately held information rather than future information \cite{agrawal2022learning}. Agrawal \cite{agrawal2022learning} first takes strategic FLP into research and designs Coordinate Medium (CM) mechanism with Prediction, also known as CMP mechanism. The CMP mechanism is simply adding $cn$ copies of $\hat{o}$ into $P$ to construct a new multiset $P'$ to run the CM mechanism, which outputs the median of $P'$ from each dimension separately. As is known that such median-point like mechanisms cannot be applied to general metric spaces \cite{schummer2002strategy}, Agrawal's work is conducted over the two-dimensional Euclidean space. Our work can be viewed as the extension of their work by adopting the same assumption on the metric space, while allowing agents to have non-uniform weights, which is more common in practice.

\section{Preliminaries}

\paragraph{Facility Location Problem (FLP).}

The classic FLP considers a set of agents located in a metric space (the current work focuses on the two-dimensional Euclidean space $\mathbb{R}^2$). Each agent $i \in [n]$ has a preferred location $p_i \in \mathbb{R}^2$. Given the preferred location set $P=\{p_1, \dots, p_n\}$, the goal of FLP is to choose a facility location $f \in \mathbb{R}^2$ that minimizes the average cost, measured as the average distance from all agents to the facility:
\[
C^u(f, P) = \frac{1}{n} \sum_{p_i \in P} d(f, p_i),
\]
where \(d(\cdot, \cdot)\) denotes the Euclidean distance. This is referred to as the \emph{utilitarian cost}. Given this social cost function, one can define and compute the optimal facility location, denoted as $o(P)$.

\paragraph{Strategic FLP.}

In strategic settings, the real preferred locations \(p_i\) are private information, and an agent may misreport it to reduce her individual cost $C(f, p_i) = d(f, p_i)$. A mechanism \(\mathcal{M}\) takes reported locations \((\hat{p}_1, \ldots, \hat{p}_n)\) and outputs a facility location \(f\). 
In order to prevent misreporting, we would like to ensure that the mechanism we design is \emph{strategyproof}. A mechanism is said to be strategyproof if the optimal decision of any agent is reporting her real information, regardless of others' reports. In our work, the reported information is about the real preferred locations, so the strategyproof constraints can be formulated as follows:
\[
C(f(P_{-i}, p_i), p_i) \leq C(f(P_{-i}, \hat{p}_i), p_i) \quad \forall P \in \mathbb{R}^{2n}, i \in [n],  \hat{p}_i \in \mathbb{R}^2.
\]

A well known mechanism designed for strategic FLP is the \emph{Coordinate Median (CM)} mechanism \cite{agrawal2022learning}, which selects the facility location by independently computing the median of all coordinates (if the number of agents $n$ is even, we assume that the smaller of two medians is returned for all coordinates). It proves that such a deterministic mechanism is strategyproof in two-dimensional space, and it guarantees a bounded approximation ratio of \(\sqrt{2}\) for the utilitarian objective \cite{agrawal2022learning}. The mechanism Generalized Coordinate Median (GCM) proposed by Agrawal \cite{agrawal2022learning} generalizes the classic version of CM by taking an independent and constant multiset $P'$ of locations (often called the \emph{phantom points}) as inputs parallel to the reported location set $P$. The location chosen by GCM for the facility is $CM(P \cup P')$. Similarly, the GCM mechanism is also proved to be deterministic and strategyproof \cite{agrawal2022learning}.

\paragraph{Strategic FLP With Weights.}

In many practical applications, agents differ in demand, importance, or service sensitivity. This difference can be modeled by assigning each agent a public positive weight \(w_i > 0\). In the weighted setting, the individual cost for agent \(i\) becomes:
\[
C(f, p_i) = w_i \cdot d(f, p_i),
\]
and the utilitarian cost becomes the average weighted distance:
\[
C^u(f, P) = \frac{1}{n} \sum_{p_i\in P} w_i \cdot d(f, p_i).
\]

The weighted case breaks many of the nice symmetry properties of the unweighted problem and presents additional difficulties for designing strategyproof mechanisms. For example, a single agent with an extremely high weight can significantly bias the optimal solution, or even dominate the outcome.

\paragraph{Prediction-augmented mechanism.}

For the weighted version of strategic FLP, we consider developing a prediction augmented mechanism that receives a prediction \(\hat{o} \in \mathbb{R}^2\) of the optimal location. This prediction may comes from some external machine learning models trained by the history data of agents. Our mechanism can be viewed as a functions $f(P, \hat{o})$ that maps the reported preferred locations and the prediction to a location of the facility. The goal of the current paper is to ensure that the proposed mechanism is strategyproof, while achieving a competitive ratio as close to $1$ as possible when the obtained predictions are highly accurate (i.e., providing a consistency guarantee), and maintaining an acceptable competitive ratio even when the prediction error rate is high (i.e., providing a robustness guarantee).

\section{Prediction Augmented Mechanism and Upper Bounds Analysis}

To prove Theorem \ref{thm1}, we first consider the CMP mechanism that is proved to be $\frac{\sqrt{2+2c^2}}{1+c}$-consistent and $\frac{\sqrt{2+2c^2}}{1-c}$-robust in the unweighted strategic FLP, and then try to extend it to the weighted case. The major contribution here is to prove that our extended version of CMP for the weighted setting has the same upper bound on the approximation ratio with the original version when choosing a proper value for $c$, a parameter for making trade-off between the consistency and robustness.

\subsection{Extending CMP to Weighted Case}
\paragraph{Vanilla CMP Mechanism.}

The vanilla CMP mechanism (i.e., the original version of CMP) is a simple way to combine agent reports with a predicted optimal location. It does this by taking $P'$ as the multiset of some copies of predicted location and applying it to GCM mechanism. The number of copies is controlled by a confidence parameter \(c \in [0, 1)\), more specifically, it adds $cn$ copies of $\hat{o}$.

If \(c\) is small, only a few copies of prediction are added, so the mechanism relies more on the reported information and is more robust to prediction errors. If \(c\) is large, more copies of prediction are added, and the mechanism more trusts the prediction.

Notably, when \(c = 0\), no copies of prediction are added, and CMP becomes equivalent to the classic CM mechanism.

This design allows CMP to smoothly trade off between being robust (when the prediction may be wrong) and following the prediction (when it is accurate), while remaining deterministic and strategyproof.

In this work, we extend the Coordinate Median with Prediction (CMP) mechanism to the weighted strategic setting, and analyze its consistency and robustness.

\begin{algorithm}[ht]
    \caption{Coordinate Median with Prediction (CMP)}
    \label{alg:cmp-unweighted}
    \KwIn{Reported agent locations \(P = \{p_1, \ldots, p_n\} = \{(x_1, y_1), \ldots, (x_n, y_n)\} \in \mathbb{R}^{2n}\); \\
    Prediction \(\hat{o} = (x_{\hat{o}}, y_{\hat{o}}) \in \mathbb{R}^2\); \\
    Confidence parameter \(c \in [0,1)\)}
    \KwOut{Facility location \(f \in \mathbb{R}^2\)}
    
    Let \(m \gets \lfloor cn \rfloor\)\;
    
    Let \(P' \gets P \cup m * \{\hat{o}\}\) (add \(m\) copies of \(\hat{o}\))\;
    
    Let \(x_{\text{med}} \gets\) \(med(x_1, \dots, x_n, x_{\hat{o}}, \dots, x_{\hat{o}})\) (median of the \(x\)-coordinate in \(P'\)) \;
    
    Let \(y_{\text{med}} \gets\) \(med(y_1, \dots, y_n, y_{\hat{o}}, \dots, y_{\hat{o}})\) (median of the \(y\)-coordinate in \(P'\)) \;
    
    \Return{\(f \gets (x_{\text{med}}, y_{\text{med}})\)}
    \end{algorithm}

\paragraph{Extension of CMP to Weighted Setting.}

We now analyze the Coordinate Median with Prediction (CMP) mechanism in the weighted setting introduced in our contribution section. Similar with the unweighted setting, heAs before, CMP still takes as input the reported agent locations and a predicted optimal location, adds phantom points at the prediction, and returns the coordinate-wise median over the augmented set. TThe procedure of our mechanism does not rely on the weights,consider weights when computing the facility location, and thus remains deterministic and strategyproofness by construction.

Although sharing a similar description of procedure with the vanilla CMP, the extended CMP asks for novel analysis techniques to incorporate the non-uniform weightsIn the weighted setting, the key difference lies in how performance is evaluated. In particular, eEach agent \(i\) is assigned a public weight \(w_i > 0\), and the cost function becomes:
\[
C^u(f, P) = \frac{1}{n} \sum_{p_i \in P} w_i \cdot d(f, p_i),
\]
reflecting the average weighted distance across agents.

Our main result, as stated in Theorem \ref{thm1}, shows that CMP achieves a consistency-robustness trade-off parameterized by \(c \in [0,1)\), with explicit performance bounds depending on the weight ratio \(W_{\max}/W_{\min}\). When theall weights are identicequal, these bounds reduce to the known results for the unweighted case~\cite{agrawal2022learning}. Particularly, when $c$ is set to $0$When \(c = 0\), the extended CMP reduces to the vanilla versionoriginal coordinate median mechanism and recovers the robustness bounds in~\cite{zhang2014strategyproof}.

We next analyze CMP's behavior on a structured input family, known as the Clusters-on-Axes (COA) construction proposed by Agrawal et al. \cite{agrawal2022learning}, to derive its consistency and robustness bounds and demonstrateunderstand why these bounds are tight.

\subsection{Consistency Analysis}

We begin our performance analysis of CMP in the weighted setting by focusing on its consistency: how well the mechanism performs when the prediction is accurate. Recall that consistency is measured by the worst-case ratio between the mechanism's output cost and the optimal cost, assuming that the prediction is exactly the optimal location.

To analyze this, we consider a structured class of worst-case instances, called \emph{Clusters-and-OPT-on-Axes} (COA), where some agents are at the top of the $y$-axis, and others split symmetrically on the left and right sides of the $x$-axis. These instances have been shown to yield tight bounds in the unweighted setting \cite{agrawal2022learning}, and we extend them here to match the weighted setting.

\begin{definition}[Weighted COA Instances]\label{def:COAW}
Given confidence parameter \( c \in [0, 1) \), a weighted COA instance consists of a reported loaction set $P$ and a prediction $\hat{o}$ that outputs \(f(P, \hat{o}, c) = (0,0)\) and optimal location \(o(P) = (0,1)\). Its agents positioned at \((0,1)\), \((x,0)\), or \((-x,0)\) for some \(x \geq 0\). Agents located at \((0,1)\) are assigned weight \(W_{\max}\), and agents at \((\pm x,0)\) are assigned weight \(W_{\min}\).
\end{definition}

This setup is designed to achieve the largest possiblmaximize the approximation ratio by concentrating high-weight agents near the optimal point and low-weight agents near the prediction. The following result formalizes that this setweighting indeed leads to the worst-case consistency.

\begin{theorem}\label{wcoa}
Among all weight distributions in COA instances, assigning weight \(W_{\max}\) to agents at \((0,1)\) and \(W_{\min}\) to agents at \((\pm x,0)\) maximizes the approximation ratio of CMP.
\end{theorem}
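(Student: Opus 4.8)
## Proof Proposal for Theorem \ref{wcoa}

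The plan is to fix the geometric configuration of a COA instance (agents only at the three positions $(0,1)$, $(x,0)$, $(-x,0)$, with the mechanism forced to output $(0,0)$ and the true optimum at $(0,1)$) and then treat the weight assignment as the only free variable. Let $a$ agents sit at $(0,1)$ and $b$ agents sit symmetrically at each of $(\pm x, 0)$, so $n = a + 2b$. Write $w_{\text{top}}$ for the (common, by the symmetry needed to keep $o(P)=(0,1)$) weight of the top agents and $w_{\text{side}}$ for the weight of the side agents. The approximation ratio of CMP on this instance is
\[
\rho(w_{\text{top}}, w_{\text{side}}) \;=\; \frac{C^u\bigl((0,0), P\bigr)}{C^u\bigl((0,1), P\bigr)} \;=\; \frac{a\, w_{\text{top}}\cdot 1 \;+\; 2b\, w_{\text{side}}\cdot x}{a\, w_{\text{top}}\cdot 0 \;+\; 2b\, w_{\text{side}}\cdot \sqrt{x^2+1}} \;=\; \frac{a\, w_{\text{top}} + 2b\, w_{\text{side}}\, x}{2b\, w_{\text{side}}\, \sqrt{x^2+1}}.
\]
First I would verify that this is the correct ratio, and in particular confirm (from the constraints defining a COA instance, i.e. that $(0,0)$ is the coordinate-wise median of $P$ augmented with $\lfloor cn\rfloor$ copies of $\hat o=(0,1)$, and that $(0,1)$ genuinely minimizes the weighted cost) what feasibility conditions link $a$, $b$, $x$, and $c$.

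Next I would show $\rho$ is nondecreasing in $w_{\text{top}}$ and nonincreasing in $w_{\text{side}}$, so that the ratio is maximized by pushing $w_{\text{top}}$ up to $W_{\max}$ and $w_{\text{side}}$ down to $W_{\min}$. Since only the ratio $t := w_{\text{top}}/w_{\text{side}}$ matters, write
\[
\rho \;=\; \frac{a\, t + 2b\, x}{2b\, \sqrt{x^2+1}},
\]
which is manifestly increasing in $t$; hence among all weight assignments consistent with $w_i \in [W_{\min}, W_{\max}]$, the choice $w_{\text{top}} = W_{\max}$, $w_{\text{side}} = W_{\min}$ gives the largest $t = W_{\max}/W_{\min}$ and therefore the largest $\rho$. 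The one subtlety is that changing the weights could in principle change which point is the true optimum $o(P)$ or violate the requirement that CMP still outputs $(0,0)$; however, CMP's output depends only on the (weight-free) coordinate medians of $P \cup \lfloor cn\rfloor\{\hat o\}$, so increasing $w_{\text{top}}$ and decreasing $w_{\text{side}}$ leaves the mechanism's output untouched, and it only makes $(0,1)$ a \emph{stronger} minimizer of the weighted cost (the top agents, who are exactly at $(0,1)$, get more weight), so feasibility of the instance is preserved.

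The main obstacle I anticipate is not the monotonicity-in-$t$ argument, which is essentially immediate, but rather pinning down the feasibility region carefully enough that the theorem is not vacuous: one must argue that for the extremal weights $W_{\max}, W_{\min}$ there still exist valid choices of $a$, $b$, $x$ realizing a COA instance with the prescribed $c$, and that the maximum of $\rho$ over that region is attained (not merely approached). I would handle this by exhibiting, for each $c \in [0,1)$, an explicit family of $(a,b,x)$ — following the unweighted COA construction of Agrawal et al.\ \cite{agrawal2022learning} but now with the $W_{\max}/W_{\min}$ factor inserted — and checking the median and optimality constraints directly; the resulting optimal $x$ (obtained by the routine calculus of maximizing $\rho$ over $x$) will be exactly the value that yields the $\frac{\sqrt{(1+c)^2W_{\min}^2+(1-c)^2W_{\max}^2}}{(1+c)W_{\min}}$ bound of Theorem \ref{thm1}, which serves as an independent sanity check that the extremal weighting claimed here is the right one.
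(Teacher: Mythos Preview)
Your proposal is correct and follows essentially the same monotonicity argument as the paper's own proof: the approximation ratio is increasing in the total weight at $(0,1)$ and decreasing in the total weight at $(\pm x,0)$, so pushing the former to $W_{\max}$ and the latter to $W_{\min}$ is optimal. Your restriction to a \emph{common} weight within each cluster (and the ``symmetry'' justification for it) is unnecessary since only the weight sums enter the ratio, but this is harmless; conversely, your explicit check that reweighting leaves both CMP's output and the identity of $o(P)$ intact is a feasibility point the paper's proof passes over.
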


\begin{proof}
    The approximation ratio of such COA is
    \begin{equation*}
        A = \frac{mW_{\max} + (n-m)xW_{\min}}{(n-m)\sqrt{1+x^2}W_{\min}},
    \end{equation*}
    where $m$ is the amount of agents whose preferred location is $(0, 1)$.

    Now we let one agent $p$'s weight be $w \in (W_{\min}, W_{\max})$, the new approximation ratio become 
    \begin{equation*}
        B = \frac{w + (m-1)W_{\max} + (n-m)xW_{\min}}{(n-m)\sqrt{1+x^2}W_{\min}},
    \end{equation*}
    when $p \in \{(0, 1)\}$, or
    \begin{equation*}
        C = \frac{xw + mW_{\max} + (n-m-1)xW_{\min}}{\sqrt{1+x^2}w + (n-m-1)\sqrt{1+x^2}W_{\min}},
    \end{equation*}
    when $p \in \{(-x, 0), (x, 0)\}$.

    Since $w + (m-1)W_{\max} < mW_{\max}$, we have $B < A$, $C = \frac{x}{\sqrt{1+x^2}} + \frac{mW_{\max}}{\sqrt{1+x^2}w + (n-m-1)\sqrt{1+x^2}W_{\min}} < A$. So maximizing weights of agents in $(0, 1)$ and minimizing weights of agents in $(-x, 0)$ and $(x, 0)$ is the maximized weight distribution of COA.
\end{proof}

We now compute the consistency bound achieved by CMP under this worst-case weighted COA structure.

\begin{theorem}
For CMP with confidence \(c \in [0, 1)\), the worst-case consistency in weighted COA instances is
\[
\frac{\sqrt{(1+c)^2 W^2_{\min} + (1-c)^2 W^2_{\max}}}{(1+c) W_{\min}}.
\]
\end{theorem}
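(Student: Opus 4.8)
\textit{Proof proposal.} The plan is to recast the claimed bound as a two‑parameter optimization over the structural data of a weighted COA instance and to solve it in closed form. By Theorem~\ref{wcoa} we may restrict attention to instances in which the $m$ agents at $(0,1)$ carry weight $W_{\max}$ and the $n-m$ agents, split evenly between $(x,0)$ and $(-x,0)$, carry weight $W_{\min}$; for such an instance $f=(0,0)$, the optimum is $o(P)=(0,1)$ by Definition~\ref{def:COAW}, and the approximation ratio is
\[
A(m,x)=\frac{mW_{\max}+(n-m)\,x\,W_{\min}}{(n-m)\sqrt{1+x^2}\,W_{\min}}.
\]
So the worst‑case consistency over weighted COA instances is exactly $\sup A(m,x)$ taken over all $(m,x)$ that yield a legitimate instance, and the theorem is the assertion that this supremum equals the displayed quantity.

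First I would determine the admissible range of $m$. The $y$‑coordinates of the augmented multiset $P'$ are $n-m$ copies of $0$ together with $m+\lfloor cn\rfloor$ copies of $1$, so the coordinate median $y_{\mathrm{med}}$ equals $0$ (hence $f=(0,0)$) precisely when $n-m\ge m+\lfloor cn\rfloor$, i.e. $m\le\tfrac12\bigl(n-\lfloor cn\rfloor\bigr)$; the symmetry between $(\pm x,0)$ already forces $x_{\mathrm{med}}=0$. Writing
\[
A(m,x)=\frac{x}{\sqrt{1+x^2}}+\frac{m}{n-m}\cdot\frac{W_{\max}}{\sqrt{1+x^2}\,W_{\min}}
\]
shows $A$ is strictly increasing in $m$ (since $m\mapsto m/(n-m)$ is increasing), so the extremal choice is $m=\lfloor\tfrac12(n-\lfloor cn\rfloor)\rfloor$. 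Letting $n\to\infty$ (or restricting to $n$ for which $cn$ is an integer and $n-cn$ is even) gives $m/n\to(1-c)/2$ and $(n-m)/n\to(1+c)/2$, hence
\[
\sup_m A(m,x)=\frac{(1-c)W_{\max}+(1+c)\,x\,W_{\min}}{(1+c)\sqrt{1+x^2}\,W_{\min}}.
\]

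It remains to maximize the right‑hand side over $x\ge 0$. Put $a=(1-c)W_{\max}>0$ and $b=(1+c)W_{\min}>0$; the task reduces to $\max_{x\ge0}\frac{a+bx}{\sqrt{1+x^2}}$, which by Cauchy--Schwarz equals $\sqrt{a^2+b^2}$, attained at $x^{*}=b/a$ (equivalently, the derivative of $\frac{a+bx}{\sqrt{1+x^2}}$ has numerator proportional to $b-ax$, vanishing at $x=b/a$; the boundary values $a$ at $x=0$ and $b$ at $x\to\infty$ are both smaller). Substituting back yields
\[
A^{*}=\frac{\sqrt{a^2+b^2}}{(1+c)W_{\min}}=\frac{\sqrt{(1+c)^2W_{\min}^2+(1-c)^2W_{\max}^2}}{(1+c)W_{\min}},
\]
which is the claimed value; the family of instances with $x=x^{*}=\tfrac{(1-c)W_{\max}}{(1+c)W_{\min}}$ and $m$ as above (with $n\to\infty$) witnesses that this value is approached, and the optimization itself shows no COA instance exceeds it.

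The main obstacle is that the extremal construction is only realizable in a limiting sense: $cn$ and $\tfrac12(n-\lfloor cn\rfloor)$ need not be integers, so the optimal $m$ and the optimal $x^{*}$ are reached only as $n\to\infty$, and one should also verify that at $x=x^{*}$ the point $(0,1)$ is still the weighted optimum demanded by Definition~\ref{def:COAW} (if it is not, the true optimal cost is only smaller, and the displayed value remains a valid upper bound on the ratio forced by COA instances). Past these bookkeeping points the argument is the single‑variable Cauchy--Schwarz computation above.
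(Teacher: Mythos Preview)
Your proposal is correct and follows essentially the same route as the paper: first push the number $m$ of agents at $(0,1)$ to its maximal value $\tfrac{(1-c)n}{2}$ allowed by the median constraint, then optimize the resulting one-variable ratio in $x$ (the paper uses the first-derivative test, you use Cauchy--Schwarz, which is equivalent here; your treatment of the integrality/limit issue is in fact more careful than the paper's). One slip to fix: in your last paragraph you state $x^{*}=\tfrac{(1-c)W_{\max}}{(1+c)W_{\min}}$, but your own derivation gives $x^{*}=b/a=\tfrac{(1+c)W_{\min}}{(1-c)W_{\max}}$, which is the value the paper obtains.
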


\begin{proof}
    To make approximation ratio as big as possible, we should maximize the amount of agents in $(0, 1)$. To guarantee the CMP output $(0, 0)$, there should be $\frac{(1-c)n}{2}$ agents report $(0, 1)$ at most to ensure the output of $y$-coordinate is $0$. In this instance, the approximation ratio is 
    \begin{equation*}
        \frac{\frac{1+c}{2}nxW_{\min}+\frac{1-c}{2}nW_{\max}}{\frac{1+c}{2}nW_{\min}\sqrt{1+x^2}} = \frac{\left(1+c\right)xW_{\min}+\left(1-c\right)W_{\max}}{\left(1+c\right)W_{\min}\sqrt{1+x^2}}
    \end{equation*}

    Take the first derivative with respect to $x$, we have
    \begin{equation*}
        \frac{\left(1+c\right)W_{\min}-\left(1-c\right)xW_{\max}}{\left(1+c\right)\left(1+x^2\right)^{3/2}W_{\min}}
    \end{equation*}

    Let numerator be $0$, we get $x = \frac{\left(1+c\right)W_{\min}}{\left(1-c\right)W_{\max}}$. For any $x < \frac{\left(1+c\right)W_{\min}}{\left(1-c\right)W_{\max}}$, the numerator is positive, and for any $x > \frac{\left(1+c\right)W_{\min}}{\left(1-c\right)W_{\max}}$, the numerator is negative, therefore when $x = \frac{\left(1+c\right)W_{\min}}{\left(1-c\right)W_{\max}}$, the approximation ratio reaches its maximum value.

    The maximum consistency is $\frac{\sqrt{(1+c)^2W^2_{\min}+(1-c)^2W^2_{\max}}}{(1+c)W_{\min}}$.
\end{proof}

This result demonstrates how the consistency of CMP degrades as the weight imbalance \(\frac{W_{\max}}{W_{\min}}\) increases. In the special case where all agents have equal weights, the bound reducsimplifies to the known result in the unweighted case. Now we begin to proveof why COA can achieve this bound.

In order for CMP to output the facility at \((0,0)\), there must be a sufficient number of agents near the origin to counterbalance the influence of phantom points at the prediction. The weighted COA structure achieves this by placing low-weight agents symmetrically on the horizontal axis, which allows for maximum geometric leverage while contributing minimally to the cost.

Meanwhile, high-weight agents are concentrated at the optimal point \((0,1)\), ensuring that the optimal facility location has minimal cost. This separation between the mechanism output and the optimal location—combined with the weight imbalance—leads to a maximal approximation ratio.

\begin{theorem}
Among all instances where CMP outputs the facility at \((0,0)\) and the prediction is accurate, the weighted COA construction achieves the highest possible consistency bound.
\end{theorem}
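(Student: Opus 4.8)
The plan is to show that every instance in which CMP returns $(0,0)$ under an accurate prediction has approximation ratio at most $\frac{\sqrt{(1+c)^2W^2_{\min}+(1-c)^2W^2_{\max}}}{(1+c)W_{\min}}$, a value that the weighted COA instance of Definition~\ref{def:COAW} already attains (with $x=\frac{(1+c)W_{\min}}{(1-c)W_{\max}}$), so the bound is tight. After an isometry and a rescaling I would assume $o(P)=(0,1)$ and $f=(0,0)$, and partition the agents into $U=\{i:y_i>0\}$ and $L=\{i:y_i\le 0\}$. Because CMP adds $m=\lfloor cn\rfloor$ phantom copies of the prediction at $(0,1)$ and the $y$-coordinate median of the resulting multiset equals $0$, at least $\lceil(n+m)/2\rceil$ of the $n+m$ points have non-positive ordinate; since none of the phantoms do, this forces $|L|\ge\frac{(1+c)n}{2}$ and hence $|U|\le\frac{(1-c)n}{2}$, up to the same rounding already present in the earlier analyses.

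Next I would bound the mechanism cost term by term. For $i\in U$ the triangle inequality gives $d(f,p_i)\le d(f,o)+d(o,p_i)=1+d(o,p_i)$. For $i\in L$, since $y_i\le 0$ we have $(1-y_i)^2\ge 1+y_i^2$, so $d(o,p_i)^2=x_i^2+(1-y_i)^2\ge x_i^2+y_i^2+1=d(f,p_i)^2+1\ge 1$; writing $\xi_i:=\sqrt{d(o,p_i)^2-1}\ge 0$, this reads $d(f,p_i)\le\xi_i$ while $d(o,p_i)=\sqrt{1+\xi_i^2}$. Summing with weights and abbreviating $W_U=\sum_{i\in U}w_i$ and $t=\sum_{i\in U}w_i\,d(o,p_i)$,
\[
\frac{\sum_i w_i\,d(f,p_i)}{\sum_i w_i\,d(o,p_i)}\ \le\ \frac{W_U+t+\sum_{i\in L}w_i\xi_i}{t+\sum_{i\in L}w_i\sqrt{1+\xi_i^2}}.
\]
If $W_U+\sum_{i\in L}w_i\xi_i\le\sum_{i\in L}w_i\sqrt{1+\xi_i^2}$ the right-hand side is at most $1$ and we are done, since the target bound is at least $1$; otherwise $\frac{A+t}{B+t}$ with $A>B\ge 0$ is non-increasing in $t\ge 0$, so we may take $t=0$ (equivalently, move every upper agent onto $o(P)$) and it remains to bound $\frac{W_U+\sum_{i\in L}w_i\xi_i}{\sum_{i\in L}w_i\sqrt{1+\xi_i^2}}$.

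Now put $S=\sum_{i\in L}w_i$ and let $\bar\xi=\frac1S\sum_{i\in L}w_i\xi_i$. Convexity of $\xi\mapsto\sqrt{1+\xi^2}$ and Jensen's inequality give $\sum_{i\in L}w_i\sqrt{1+\xi_i^2}\ge S\sqrt{1+\bar\xi^2}$, so the ratio is at most $\frac{W_U+S\bar\xi}{S\sqrt{1+\bar\xi^2}}=\frac{a+\bar\xi}{\sqrt{1+\bar\xi^2}}$ with $a=W_U/S$; and by Cauchy--Schwarz $a+\bar\xi=\langle(a,1),(1,\bar\xi)\rangle\le\sqrt{1+a^2}\,\sqrt{1+\bar\xi^2}$, so the ratio is at most $\sqrt{1+a^2}$, with equality exactly when $\bar\xi=1/a$ (which is precisely what COA realizes). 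Finally $W_U\le\frac{(1-c)n}{2}W_{\max}$ and $S\ge|L|\,W_{\min}\ge\frac{(1+c)n}{2}W_{\min}$, so $a\le\frac{(1-c)W_{\max}}{(1+c)W_{\min}}$, and since $\sqrt{1+a^2}$ is increasing in $a\ge 0$ we conclude $\frac{\sum_i w_i\,d(f,p_i)}{\sum_i w_i\,d(o,p_i)}\le\sqrt{1+(\tfrac{(1-c)W_{\max}}{(1+c)W_{\min}})^2}=\frac{\sqrt{(1+c)^2W^2_{\min}+(1-c)^2W^2_{\max}}}{(1+c)W_{\min}}$, as required.

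I expect the genuinely delicate point to be the opening normalization: CMP is a coordinate-wise median and is not rotation-invariant, so one cannot simply rotate $o(P)$ onto a coordinate axis. One must argue that the worst case does occur with the optimum on an axis --- morally because, when $o(P)$ is tilted away from an axis, the two coordinate-median conditions jointly guarantee only a weaker lower bound on the number of agents lying on the far side of $o(P)$, and that count is exactly what drives the estimate above. Making this rigorous (for instance by redefining $L=\{i:\langle p_i,o(P)\rangle\le 0\}$ and re-deriving $|L|\ge\frac{(1+c)n}{2}$ from the $x$- and $y$-median conditions, or by symmetrizing over the two axes) is the technical heart of the argument; once the optimum is on an axis, the chain triangle inequality $\to$ Pythagoras $\to$ Jensen $\to$ Cauchy--Schwarz $\to$ weight-extremality above is routine.
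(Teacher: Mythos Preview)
Your on-axis argument is clean and correct: once $o(P)=(0,1)$, the chain triangle inequality $\to$ Pythagoras $\to$ Jensen $\to$ Cauchy--Schwarz delivers the bound directly and is considerably shorter than the paper's route. But the gap you flag in your final paragraph is real and is not closed by either of the fixes you sketch. Redefining $L=\{i:\langle p_i,o(P)\rangle\le 0\}$ does preserve the Pythagorean step $d(o,p_i)^2\ge d(f,p_i)^2+|o|^2$ for $i\in L$, but the two coordinate-median constraints only force $|\{i:x_i\le 0\}|\ge\frac{(1+c)n}{2}$ and $|\{i:y_i\le 0\}|\ge\frac{(1+c)n}{2}$ separately; inclusion--exclusion then yields merely $|\{i:x_i\le 0,\,y_i\le 0\}|\ge cn$, and an agent with, say, $x_i<0$ but $y_i$ large and positive need not lie in the half-space $\langle p,o\rangle\le 0$ when $o$ sits off the axes. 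So the key count $|L|\ge\frac{(1+c)n}{2}$ can fail, and with it the bound $a\le\frac{(1-c)W_{\max}}{(1+c)W_{\min}}$ on which your whole estimate rests. The vaguer ``symmetrize over the two axes'' does not obviously repair this either, since the Euclidean cost is not separable coordinate-wise.

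The paper handles exactly this obstruction, but by a completely different method: rather than bounding the ratio in place, it proves through a chain of point-moving reductions (adapted from the unweighted framework of Agrawal et al.) that every instance with $f=(0,0)$ can be transformed---without decreasing the approximation ratio and without moving the CMP output---first into a CA or OA configuration (agents on the coordinate axes, optimum in a controlled position) and then into a weighted COA instance. All of the non-rotation-invariance difficulties are absorbed into those reductions; the final step is then just the one-variable optimization over COA already carried out in the preceding theorem. Your analytic inequalities essentially re-derive that last step elegantly, but they do not replace the reduction that makes the on-axis assumption legitimate.
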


\begin{proof}
    Many lemmas used in the proof are the same as \cite{agrawal2022learning}. For those lemmas, we just provide them and do not give the detailed proofs.

    Before we begin the proof, we firstly present \emph{CA} and \emph{OA} proposed by Agrawal et al. \cite{agrawal2022learning}.
\end{proof}

\begin{definition}[Clusters-on-Axes (CA) (\cite{agrawal2022learning})]
Given a confidence value \(c\), and prediction \(\hat{o}\), define the family of multisets \(P\) such that:

\begin{enumerate}
    \item Output at origin: \(f(P, \hat{o}, c) = (0, 0)\);
    \item Opt in top-right quadrant: \(y_{o(P)} \geq x_{o(P)} > 0\);
    \item No move towards opt: for all \(p_i \in P\) and \(\varepsilon \in (0, 1]\),
    \[
    f\left((P \setminus \{p_i\}) \cup \{p_i + \varepsilon(o(P) - p_i)\}, \hat{o}, c\right) \neq f(P, \hat{o}, c);
    \]
    \item Cluster structure: there exist \(x_1, x_2, y_1, y_2 \geq 0\) such that:
    \begin{enumerate}
        \item Clusters on axes: For all \(p \in P\),
        \[
        p \in \{(-x_1, 0), (x_2, 0), (0, -y_1), (0, y_2), o(P)\};
        \]
        \item Less points in left: 
        \[
        |\{p \in P : p \in A^-_{x_{<}}\}| < |\{p \in P : p \in A^+_{x_{>}} \cup \{o(P)\}\}|;
        \]
        \item Less points in bottom: 
        \[
        |\{p \in P : p \in A^-_{y_{<}}\}| < |\{p \in P : p \in A^+_{y_{>}} \cup \{o(P)\}\}|;
        \]
        \item Symmetric x-clusters: If \((-x_1, 0), (x_2, 0) \in P\), then \(x_o + x_1 = x_2 - x_o\);
        \item Symmetric y-clusters: If \((0, -y_1), (0, y_2) \in P\), then \(y_o + y_1 = y_2 - y_o\).
    \end{enumerate}
\end{enumerate}

\end{definition}

\begin{definition}[Optimal-on-Axis (OA) (\cite{agrawal2022learning})]
Given a confidence value \(c\), and prediction \(\hat{o}\), define the family of multisets \(P\) such that:
\begin{enumerate}
    \item Output at origin: \(f(P, \hat{o}, c) = (0, 0)\);
    \item Opt on +\(y\)-axis: \(x_{o(P)} = 0, \ y_{o(P)} > 0\);
    \item Points on axes: \(\forall p \in P, \ p \in A_x \cup A_y\).
\end{enumerate}

\end{definition}

For the convenience of following proof, we make the following definitions like \cite{agrawal2022learning}:

We use \( P^C_{ca}(c) \), \( P^C_{oa}(c) \), and \( P^C_{coa}(c) \) to denote the families of CA, OA, and COA instances respectively, under confidence parameter \( c \in [0, 1) \), for the consistency analysis setting with accurate prediction \( \hat{o} = o(P) \).

Similarly, \( P^R_{ca}(c) \), \( P^R_{oa}(c) \), and \( P^R_{coa}(c) \) are used to the robustness analysis, setting with wrong prediction \( \hat{o} = (0,0) \).

\begin{lemma}
    For any point set \( P \) and confidence \( c \in [0,1) \) such that \\ 
\( f(P, \hat{o}(P), c) = (0,0) \),  
and \( y_o(P) \ge x_o(P) > 0 \),  
and for every \( p \in P \), we have \( p \in A_x \cup A_y \cup \{o(P)\} \),  
if at least one of the following two conditions holds:  
- \( |\{p \in P : p \in A^<_{-x}\}| \ge |\{p \in P : p \in A^>_{+x}\} \cup \{o(P)\}| \),  
- \( |\{p \in P : p \in A^<_{-y}\}| \ge |\{p \in P : p \in A^>_{+y}\} \cup \{o(P)\}| \),  
then there exists a point set \( Q \) such that \( r(Q, \hat{o}(Q), c) > r(P, \hat{o}(P), c) \),  
regardless of whether \( \hat{o}(P) = o(P), \hat{o}(Q) = o(Q) \) or \( \hat{o}(P) = \hat{o}(Q) = (0,0) \).

\end{lemma}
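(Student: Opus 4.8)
The plan is to argue by contradiction: suppose $P$ satisfies the listed hypotheses but also satisfies (say) the first of the two displayed conditions, namely that the left $x$-cluster is at least as large as the right $x$-cluster together with $o(P)$. I will show that $P$ cannot in fact force the output to the origin unless I modify it, and that any legal modification produces a strictly larger ratio. Concretely, the first step is to recall what it means for CMP to output $(0,0)$: on the $x$-coordinate, the median of the augmented multiset $P' = P \cup m\{\hat o\}$ must be $0$. Since all points lie on $A_x \cup A_y \cup \{o(P)\}$, the $x$-coordinates are: $0$ for every point on $A_y$, $0$ for the phantom copies when we are in the consistency setting (because then $x_{\hat o} = x_{o(P)} > 0$ — careful, this is \emph{not} zero), the negative value $-x_1$ for the left cluster, the positive value $x_2$ for the right cluster, and $x_{o(P)} > 0$ for the optimum. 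So the points with $x < 0$ are exactly the left cluster, and the points with $x > 0$ are the right cluster plus $o(P)$ plus (in the consistency regime) the phantoms.

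The key observation I would make is that if the number of points strictly to the left of $0$ is at least the number strictly to the right, then the median is $\le 0$, and combined with the fact that there is nothing forcing it up, we either already violate the ``no move towards opt'' / output-at-origin structure or we have slack that can be exploited. More precisely, I would take the witness $Q$ to be obtained from $P$ by \emph{reflecting} the axis on which the imbalance occurs — e.g. swap the roles of the left and right $x$-clusters, or push the heavy left cluster outward — so that $Q$ still outputs the origin (the median is unchanged or can be rebalanced by a symmetric adjustment) but now the geometric configuration more closely resembles a COA-type extremal instance, whose ratio we have already shown in the preceding theorems to be governed by $\frac{\sqrt{(1+c)^2W_{\min}^2+(1-c)^2W_{\max}^2}}{(1+c)W_{\min}}$ (consistency) and the analogous robustness quantity. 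The monotonicity computations from Theorem~\ref{wcoa} (the inequalities $B<A$ and $C<A$) are exactly the tool that lets me conclude $r(Q,\hat o(Q),c) > r(P,\hat o(P),c)$: moving weight towards the optimal point and away from the origin strictly increases the ratio, and the imbalance hypothesis is precisely what guarantees there is room to do this while keeping the output pinned at $(0,0)$.

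I would handle the two regimes ($\hat o(P)=o(P)$ for consistency, $\hat o(P)=\hat o(Q)=(0,0)$ for robustness) uniformly by noting that in both cases the phantom points sit at a fixed location ($o(P)$ or the origin), so the median condition is an inequality on counts of agents only, shifted by the known constant $m=\lfloor cn\rfloor$; the imbalance hypothesis translates into the phantom-adjusted median already being $\le 0$ with strict slack on the agent side, which is what I exploit. The main obstacle I anticipate is bookkeeping around the even/odd tie-breaking in the median (the paper takes the smaller of two medians) and making sure that after the reflection/rebalancing the modified instance $Q$ \emph{still} lies in the family for which the output is exactly $(0,0)$ — i.e. that I have not accidentally moved the median off the origin by an off-by-one in the counts. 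A secondary subtlety is ensuring $Q$ can be chosen to also satisfy the ``no move towards opt'' condition (condition 3 of CA) so that it is a legitimate member of the class being analyzed; I expect this to follow by continuity/pushing clusters to their extremal symmetric positions, but it needs care. Once those counting invariants are nailed down, the strict increase in the ratio is immediate from the already-established monotonicity of $A$ in the weight placement.
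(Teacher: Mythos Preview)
Your proposal has a genuine gap: you never actually build a concrete $Q$ and verify the inequality. The two candidate constructions you float---``reflecting'' the $x$-axis clusters, or ``pushing the heavy left cluster outward''---are different operations, and for neither do you check that (i) the CMP output stays at the origin, (ii) the optimum moves in the right direction, and (iii) the ratio strictly increases. Reflection keeps the cost at the origin unchanged (distances are symmetric) while scrambling the optimum in a way you do not control; pushing the left cluster outward increases both the numerator \emph{and} the denominator, so you would need a quantitative comparison you never carry out. Your appeal to Theorem~\ref{wcoa} is also misplaced: that theorem compares different \emph{weight assignments} inside a fixed COA geometry, whereas here the weights are fixed and you are moving points, so the $B<A$, $C<A$ inequalities do not apply. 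Finally, the ``proof by contradiction'' framing is a red herring---the lemma is a pure existence statement, and what is needed is simply a construction.

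The paper's argument is far more elementary than what you sketch. It takes $Q$ to be $P$ with every point \emph{not} on the $y$-axis translated to the left by a small $\epsilon$ (so the $x$-axis clusters and the points at $o(P)$ all slide left, while the $y$-axis points stay put). The output of CMP remains $(0,0)$. The optimal cost strictly drops because the $y$-axis points are now closer (in the $x$-direction) to the rest of the configuration, while the contribution of the shifted points to the optimum is preserved by translation. For the mechanism's cost at the origin, the left cluster's distance grows by $\epsilon$ per point and the right cluster's (and optimum-cluster's) distance shrinks; the imbalance hypothesis $|A^{<}_{-x}| \ge |A^{>}_{+x} \cup \{o(P)\}|$ is exactly what makes the net change nonnegative. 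Hence numerator up (weakly), denominator down (strictly), ratio strictly up. No reflection, no COA comparison, no median-counting subtleties beyond observing that a small shift does not change which coordinate is the median.
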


\begin{proof}
    Without loss of generality, assume that \( P \) satisfies  
\[
|\{p \in P : p \in A^<_{-x}\}| \ge |\{p \in P : p \in A^>_{+x}\} \cup \{o(P)\}|.
\]  
We construct a new point set \( Q \) by shifting all points not on the \( y \)-axis to the left by a small amount \( \epsilon \), while keeping the positions of the points on the \( y \)-axis unchanged. The value of \( \epsilon \) is small enough to ensure that \( o(Q) \) remains in the first quadrant.

It is clear that \( f(P,\hat{o}(P), c)=f(Q,\hat{o}(Q),c)=(0,0) \).  
For the points on the \( x \)-axis, we have  
\[
\sum_{p_i \notin A_y,\, p_i \in P}d(p_i, o(P))w_i = \sum_{q_i \notin A_y,\, q_i \in Q}d(q_i, o(Q))w_i.
\]  
For the points on the \( y \)-axis, we have  
\[
\sum_{p_i \in A_y,\, p_i \in P}d(p_i, o(P))w_i > \sum_{q_i \in A_y,\, q_i \in Q}d(q_i, o(Q))w_i,
\]  
i.e.,  
\[
C^u(o(P),P) > C^u(o(Q),Q).
\]

Since  
\[
|\{p \in P : p \in A^<_{-x}\}| \ge |\{p \in P : p \in A^>_{+x}\} \cup \{o(P)\}|,
\]  
and  
\[
C^u(f(P, \hat{o}(P), c)) - C^u(f(Q, \hat{o}(Q), c)) = \epsilon\left(\sum_{p_i \in A^<_{-x}}w_i - \sum_{p_i \in A^>_{+x}}w_i\right) \le 0,
\]  
it follows that  
\[
r(Q, \hat{o}(Q), c) > r(P, \hat{o}(P), c).
\]

\end{proof}

\begin{lemma}
    For any confidence \( c \in [0, 1) \), let \( \alpha = \max_{P \in P_{oa}^C(c) \cup P_{ca}^C(c)}r(P, o(P), c) \),\\ \( \beta = \max_{P \in P_{oa}^C(c) \cup P_{ca}^C(c)}r(P, (0, 0), c) \).  
    Then the CMP algorithm is \( \alpha \)-consistent and \( \beta \)-robust.
\end{lemma}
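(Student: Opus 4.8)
The plan is to prove both claims by a single worst-case reduction: I will show that for every instance $P$ (and, in the robustness case, every prediction $\hat{o}$) there is an OA or CA instance $Q$ with $r(Q,\cdot,c)\ge r(P,\cdot,c)$. Since OA and CA instances are themselves legitimate instances, this sandwiches $\max_P r(P,\cdot,c)$ between $\max_{P_{oa}^C(c)\cup P_{ca}^C(c)} r$ and itself, giving $\alpha$-consistency when the prediction is evaluated at $\hat{o}=o(P)$ and $\beta$-robustness when it is evaluated at the worst $\hat{o}$. The consistency and robustness arguments share one template and differ only in where the phantom copies sit: at $o(P)$ throughout the consistency reduction, and at the output throughout the robustness reduction.

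First I would normalize. Because CMP, the Euclidean metric, and the ratio $r$ are invariant under translation, rotation by multiples of $\pi/2$, reflection, and uniform scaling, I may assume $f(P,\hat{o},c)=(0,0)$ and that $o(P)$ lies in the closed top-right quadrant with $y_{o(P)}\ge x_{o(P)}>0$. For robustness I would additionally reduce to $\hat{o}=(0,0)$: given any $(P,\hat{o})$ with output at the origin, relocating the $m$ phantom copies from $\hat{o}$ onto the output does not change any agent's cost to the output or to $o(P)$, and it only adds coordinate mass at $0$ while removing mass away from $0$, so each coordinate median stays at $0$ and the output remains $(0,0)$. Hence the ratio is unchanged and the worst prediction may be taken to coincide with the output, matching the defining condition $\hat{o}=(0,0)$ of $P_{oa}^R(c)$ and $P_{ca}^R(c)$.

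Next I would push every agent onto the axes to reach the cluster structure of a CA instance. The key fact is that $f$ is fixed by the two coordinate medians, so sliding a point along one coordinate while it stays on the same side of the corresponding median value leaves the output at $(0,0)$. Using this, for each off-axis agent $p_i=(a,b)$ I would slide it onto whichever axis weakly increases its distance to $o(P)$ without decreasing its distance to the origin, so $r$ does not drop; condition~3 of CA (\emph{no move towards opt}) is exactly the statement that at a ratio-maximal instance no further beneficial slide \emph{toward} $o(P)$ remains. After all agents lie on the axes I would merge co-located groups into the five cluster positions $(-x_1,0),(x_2,0),(0,-y_1),(0,y_2),o(P)$, and then invoke the preceding lemma to discharge the \emph{less points in left} and \emph{less points in bottom} requirements: whenever the left (resp. bottom) half-axis carries at least as much as the right (resp. top) half-axis together with $o(P)$, that lemma already yields a strictly better instance, so at the maximum the strict CA inequalities must hold; the symmetric-cluster conditions then follow by a median-preserving balancing shift between the two half-axes that never lowers the cost to $o(P)$. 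This step works verbatim in both the consistency setting (phantoms at $o(P)$) and the robustness setting (phantoms at $(0,0)$), since the preceding lemma is stated for both.

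Finally I would split the two terminal shapes. If symmetrization leaves $x_{o(P)}>0$, the instance satisfies every clause of CA and lies in $P_{ca}^C(c)$ (resp. $P_{ca}^R(c)$); if the reduction drives $x_{o(P)}\to 0$, the optimum lands on the $+y$-axis with all points on the axes, so the instance lies in $P_{oa}^C(c)$ (resp. $P_{oa}^R(c)$). In either case $r(P,\cdot,c)\le \max_{P_{oa}^C(c)\cup P_{ca}^C(c)} r=\alpha$ for consistency and $\le\beta$ for robustness, while the reverse inequalities are immediate because these families are subsets of all instances; hence CMP is $\alpha$-consistent and $\beta$-robust. I expect the main obstacle to be the axis-projection step: I must check that each slide can be chosen to keep \emph{both} coordinate medians on the same side of $0$ at once, so the output genuinely stays at $(0,0)$, while weakly increasing $r$. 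This requires a careful case split on the sign pattern of $(a,b)$ relative to $o(P)$ and on whether the agent is pivotal for either median, and it is where the non-uniform weights $w_i$ must be tracked so that moving a low-weight agent for geometric leverage does not inadvertently raise the denominator $C^u(o(P),P)$.
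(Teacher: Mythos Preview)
Your overall skeleton---normalize to $f=(0,0)$ with $o(P)$ in the upper-right quadrant, collapse the phantoms onto the output for the robustness case, push every agent onto the axes, then invoke the preceding lemma to certify the CA count inequalities---matches the paper's. The gap is in the axis-projection step. You propose to slide each off-axis agent along one coordinate onto an axis, choosing the slide so that it ``weakly increases its distance to $o(P)$ without decreasing its distance to the origin, so $r$ does not drop.'' That inequality runs the wrong way: the denominator of $r$ is $C^u(o(P),P)$, so increasing $d(p_i,o(P))$ pushes the denominator \emph{up}, which can only lower $r$ even if the numerator is held fixed. More seriously, a coordinate slide does not preserve the geometric median, so after the slide $o(P)$ has moved and you have lost control of both the denominator and the ``opt in top-right quadrant'' normalization you just set up. The case split you anticipate at the end does not rescue this, because there is no general reason a coordinate-axis slide should simultaneously fix the output, avoid raising the optimal cost, and avoid lowering the output cost.

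The paper sidesteps both issues by moving each off-axis point \emph{toward} $o(P)$ rather than along a coordinate. Moving a point toward the geometric median leaves that median fixed (Lemma~C1 of Agrawal et al.), so $o(Q)=o(P)$ throughout; and Lemma~C2 of the same reference gives directly that $r$ is nondecreasing along such a move. The motion stops when continuing would change the coordinate median---i.e., when the point hits an axis or reaches $o(P)$---which is exactly the CA condition~3 and lands every agent in $A_x\cup A_y\cup\{o(P)\}$. After that the paper rearranges the axis points into the four symmetric clusters via Lemma~4.10 of Agrawal et al.\ together with a weighted balancing (heavier half on the negative side, lighter half on the positive side, preserving $\sum |x_i|w_i$), which is the step you sketch as ``median-preserving balancing.'' Replacing your coordinate slides with straight-line moves toward $o(P)$ and citing those two lemmas makes the rest of your outline go through.
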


\begin{proof}
    Without loss of generality, let \( f(P, \hat{o}(P), c) = (0, 0) \) and \( y_o(P) \ge x_o(P) \ge 0 \).  
    Since \( f(P, \hat{o}(P), c) = (0, 0) \), it follows that \( f(P, (0, 0), c) = (0, 0) \),  
    hence \( r(P, \hat{o}(P), c) = r(P, (0, 0), c) \).

    If there exists a point \( p \in P \) such that \( p \notin A_x \cup A_y \cup \{o(P)\} \),  
    then it can be moved in the direction of \( o(P) \).  
    According to Lemma C1 in \cite{agrawal2022learning}, the geometric median remains unchanged during this movement.  
    Furthermore, by Lemma C2 in \cite{agrawal2022learning}, \( r(P, \hat{o}(P), c) \) increases during the movement.  
    Using this transformation, all points can be moved onto \( A_x \cup A_y \cup \{o(P)\} \);  
    denote the resulting point set as \( Q \), where \( o(Q) = o(P) \).

    If \( x_o(P) = 0 \), then \( Q \) is an OA instance.  
    Otherwise, by Lemma 4.10 in \cite{agrawal2022learning}, the points on the coordinate axes can be rearranged to  \\
    \( (-x_1, 0), (x_2, 0), (-y_1, 0), (y_2, 0) \), with \( x_1, x_2, y_1, y_2 > 0 \).  
    By the inequality of arithmetic and geometric means, assign the points on the \( x \)-axis by descending order of weights:  
    place the heavier half at \( (-x_1, 0) \), and the thinner half at \( (x_2, 0) \),  
    ensuring \( x_o(Q) + x_1 = x_2 - x_o(Q) \) and that \( \sum_{q_i \in A_x}|x_i|w_i \) remains unchanged.  
    (The same applies to the \( y \)-axis.)  
    Then \( C^u(Q, (0, 0)) \) remains unchanged, but \( C^u(Q, o(Q)) \) decreases, thus forming a CA instance.
    
    Using Lemma 1 and Lemma 2, we can transform any instances into either CA or OA without improving the approximation ratio. Then, we proof that both of them can be transformed into COA instances without improving the approximation ratio.

\end{proof}

\begin{lemma}
    For any confidence \( c \in [0,1) \) and any point set \( P \in P^C_{oa}(c) \),  
    there exists a point set \( Q \) such that \( r(Q, o(Q), c) > r(P, o(P), c) \)  
    or \( Q \in P^C_{coa}(c) \) with \( r(Q, o(Q), c) \ge r(P, o(P), c) \).  
    The conclusion also holds when the prediction is \( (0,0) \).
\end{lemma}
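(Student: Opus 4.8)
The plan is to take an arbitrary OA instance $P \in P^C_{oa}(c)$ and transform it step by step into a COA instance without decreasing the approximation ratio, unless along the way we stumble onto an instance with strictly larger ratio (in which case we are already done). Recall that in an OA instance all points lie on $A_x \cup A_y$, the optimum $o(P)$ sits strictly above the origin on the $+y$-axis, and CMP outputs $(0,0)$. The first step is to reduce the $y$-axis points to a clean symmetric cluster configuration. Using the arithmetic-mean/geometric-mean style rearrangement already invoked in the proof of Lemma~2, I would sort the $y$-axis points by weight and push the heavier half down to a single location $(0,-y_1)$ and the lighter half up to $(0,y_2)$, choosing $y_1,y_2$ so that $y_o(P)+y_1 = y_2 - y_o(P)$ (which keeps the median at $0$) and so that $\sum |y_i| w_i$ is preserved; this leaves $C^u((0,0),P)$ unchanged while weakly decreasing $C^u(o(P),P)$, hence weakly increasing $r$. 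The same manipulation is applied to the $x$-axis points, collapsing them to $(-x_1,0)$ and $(x_2,0)$ with the symmetry condition $x_1 = x_2$ (since $x_o = 0$ here).

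Next I would argue we can drive the $x$-axis clusters onto the $y$-axis (or to the origin), so that the only off-$y$-axis mass disappears and all that remains is the $y$-axis cluster structure — this is the step that converts an OA-type configuration into the COA skeleton where agents sit at $(0,1)$, $(x,0)$, $(-x,0)$ after rescaling. Concretely, I would consider sliding the symmetric pair $(-x_1,0),(x_1,0)$ continuously: moving them inward toward the origin does not change which coordinate is the median (they are balanced), decreases their contribution to $C^u(o(P),P)$, and leaves the $y$-contribution untouched, so $r$ weakly increases; pushing them all the way in lets us either absorb them at the origin or, by the ``no move towards opt'' slack, relocate the residual weight. After normalizing by an affine map so that $o(P) = (0,1)$ and the output is $(0,0)$, what is left is exactly a weighted COA instance in the sense of Definition~\ref{def:COAW}, and by Theorem~\ref{wcoa} the ratio is further (weakly) increased by pushing the weight of the $(0,1)$-cluster up to $W_{\max}$ and the weight of the horizontal clusters down to $W_{\min}$. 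Chaining these monotone steps gives a $Q \in P^C_{coa}(c)$ with $r(Q,o(Q),c) \ge r(P,o(P),c)$, which is the claim; the robustness version is identical because, as noted in Lemma~2, $f(P,\hat o,c)=(0,0)$ forces $f(P,(0,0),c)=(0,0)$ and all the cost comparisons above are prediction-independent.

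The main obstacle I anticipate is the bookkeeping around the CMP feasibility constraint, i.e.\ ensuring that every intermediate configuration still makes the coordinate-wise median equal to $(0,0)$ and still satisfies the ``no move towards opt'' condition that defines membership in the CA/OA/COA families. The weight-based rearrangement does not directly control the \emph{number} of points on each side of the axes (it controls only their weighted positions), whereas the median is determined by counts, not weights — so I would need to be careful that the heavier-half/lighter-half split can be arranged to also respect the ``fewer points on the left/bottom'' inequalities inherited from the CA definition, possibly by splitting a cluster into two collocated sub-clusters of appropriate sizes. A second, related subtlety is the parity/tie-breaking convention for the median when $n$ (or $n+m$) is even: the construction must place just enough phantom-plus-agent mass on the $+y$ side to hold the median at $0$ without overshooting, which is exactly the $\frac{(1-c)n}{2}$ counting that appeared in the consistency computation, and I would reuse that bound here to pin down the extremal cluster sizes.
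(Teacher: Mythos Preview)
Your proposal has the transformation running in the wrong direction, and the key monotonicity claim is false. When you slide the symmetric pair $(-x_1,0),(x_1,0)$ toward the origin you do \emph{not} only shrink the denominator: the numerator $C^u((0,0),P)$ loses $w\cdot x_1$ per point, while the denominator $C^u(o(P),P)$ loses only $w\cdot\bigl(\sqrt{x_1^2+y_o^2}-y_o\bigr)$ per point. Since
\[
\frac{x_1}{\sqrt{x_1^2+y_o^2}-y_o}=\frac{\sqrt{x_1^2+y_o^2}+y_o}{x_1}
\]
can be arbitrarily large (take $x_1\ll y_o$), the approximation ratio typically \emph{decreases} under this move, not increases. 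This also explains why the end state you describe (``the only off-$y$-axis mass disappears'') cannot be a COA instance: by Definition~\ref{def:COAW} a COA instance must retain agents at $(\pm x,0)$, and those horizontal agents are exactly what drives the numerator up in the worst case.

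The paper's argument goes the other way around. It keeps the $x$-axis points, after a single symmetrization/averaging step that preserves $C^u((0,0),\cdot)$ and, by concavity of $t\mapsto\sqrt{t^2+y_o^2}$, weakly decreases $C^u(o(P),\cdot)$. It then pushes \emph{every $y$-axis point to $o(P)$} via a three-case analysis on the sign and size of $y$ relative to $y_o$: in each case either both costs drop by the same amount or the numerator rises while the denominator falls, so the ratio weakly increases. Once all $y$-axis mass sits at $o(P)$ and the $x$-axis mass is symmetric at $(\pm d_x,0)$, a rescaling gives a COA instance directly. Your heavy-down/light-up rearrangement on the $y$-axis and the subsequent collapse of the $x$-axis are therefore both unnecessary and, in the second case, actively harmful; the median-bookkeeping obstacle you anticipated is real but secondary to this reversal of direction.
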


\begin{proof}
    Suppose \( P \in P^C_{oa}(c) \), then \( y_o(P) > 0 \),  
    and for every \( p \in P \), we have \( p \in A_x \cup A_y \).

    Define  
    \[
    d_x = \frac{1}{|\{p \mid p \in A_x\}|}\sum_{p \in A_x}|x_i|w_i,
    \]  
    and assign each of these points a weight equal to  
    \[
    \frac{1}{|\{p \mid p \in A_x\}|}\sum_{p \in A_x}w_i.
    \]  
    Place half of them at \( (-d_x, 0) \) and the other half at \( (d_x, 0) \).  
    By the Corollary 3 in \cite{good2020coordinate},  
    we conclude that \( C^U(P, o(P)) \) decreases while \( C^U(P,(0,0)) \) remains unchanged.

    For the points on the \( y \)-axis, consider an individual point \( p = (0, y) \),  
    and distinguish three cases:

    Case 1: \( y > y_o(P) \)  
    Move the point to \( o(P) \); during the process, both  
    \( C^U(P, o(P)) \) and \( C^U(P,(0,0)) \) decrease by the same amount,  
    and since \( C^U(P, o(P)) < C^U(P,(0,0)) \), we have that \( r(P, o(P), c) \) increases.

    Case 2: \( 0 \leq y < y_o(P) \)  
    Move the point to \( o(P) \); in this case,  
    \( C^U(P, o(P)) \) decreases and \( C^U(P,(0,0)) \) increases,  
    so \( r(P, o(P), c) \) increases.

    Case 3: \( y < 0 \)  
    First move the point to \( (0, 0) \) — this reduces to Case 2 —  
    then move it to \( o(P) \).

    In conclusion, all points on the \( y \)-axis can be moved to \( o(P) \)  
    to increase \( r(P, o(P), c) \). The resulting point set  
    \( Q \in P^C_{coa}(c) \), and the above conclusion similarly holds  
    when the prediction point is \( (0,0) \).
    
    Using Lemma 3, we can transform any OA instance into a corresponding COA instance without decreasing approximation ratio.
\end{proof}

\begin{lemma}
For any confidence \( c \in [0,1) \) and point set \( P \in P^C_{ca}(c) \), there exists a point set \( Q \) such that \( r(Q, o(Q), c) > r(P, o(P), c) \), or \( Q \in P^C_{ca}(c) \) and \( r(Q, o(Q), c) \ge r(P, o(P), c) \), and for any \( q \in Q \), \( q \in A_x \cup A_{+y} \cup \{o(Q)\} \). This also holds when the prediction result is \( (0,0) \).
\end{lemma}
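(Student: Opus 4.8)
The plan is to reduce a general CA instance to one in which no point lies strictly below the $x$-axis (i.e., $A_{-y}$ is empty), while not decreasing the approximation ratio, and to do so without leaving the CA family. Recall that in a CA instance the points live on the set $\{(-x_1,0),(x_2,0),(0,-y_1),(0,y_2),o(P)\}$ with the symmetry and ``fewer points on the negative side'' conditions, and that $f(P,\hat{o},c)=(0,0)$ while $y_o(P)\ge x_o(P)>0$. The only points that can sit strictly below the $x$-axis are the cluster at $(0,-y_1)$. So the whole content is: show how to get rid of that bottom cluster.

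First I would invoke Lemma~1: since $P\in P^C_{ca}(c)$, the ``less points in bottom'' strict inequality holds, so we are \emph{not} in the degenerate case of Lemma~1, which means I cannot directly apply it; instead I use it as a guardrail, ensuring that any transformation I perform keeps $|A^<_{-y}| < |A^>_{+y}\cup\{o(Q)\}|$, because if that ever fails Lemma~1 already gives a strictly better instance $Q$ and we are done. Next, I would take the bottom cluster $(0,-y_1)$ together with its symmetric partner portion of the top cluster $(0,y_2)$ forced by the ``symmetric $y$-clusters'' condition $y_o+y_1=y_2-y_o$, and move both toward $o(P)=(0,y_o)$ simultaneously along the $y$-axis, keeping the symmetry condition intact at every moment (so that the coordinate median stays at the origin, hence $f$ is unchanged) and keeping the point counts on each side fixed. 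During this move, the bottom point travels a distance $y_1+y_o$ to reach $o(P)$ and the (matched portion of the) top point travels $y_2-y_o=y_1+y_o$, so $C^u(\cdot,(0,0))$ changes by $\epsilon(\,\sum_{\text{top}}w_i-\sum_{\text{bottom}}w_i\,)$ while $C^u(\cdot,o(P))$ changes by $-\epsilon(\,\sum_{\text{top}}w_i+\sum_{\text{bottom}}w_i\,)$; I would check, exactly as in Cases~1--3 of the proof of Lemma~3, that because $C^u(\cdot,o(P))<C^u(\cdot,(0,0))$ the ratio $r(\cdot,o(\cdot),c)$ is nondecreasing (and strictly increasing unless the weights balance out). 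Iterating until the bottom cluster is absorbed into $o(P)$ yields $Q$ with $A_{-y}(Q)=\emptyset$; if at any stage the ``fewer points on bottom'' condition degenerates, Lemma~1 finishes the job with a strict improvement.

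The one subtlety I would spell out carefully is that collapsing the bottom cluster onto $o(Q)$ must still leave a valid CA instance: after the move the ``symmetric $y$-clusters'' precondition becomes vacuous (it is an implication conditioned on \emph{both} $y$-clusters being present), the ``less points in bottom'' condition now reads $0<|A^>_{+y}\cup\{o(Q)\}|$ which holds since $o(Q)$ itself is there, conditions on the $x$-axis are untouched, and ``no move towards opt'' is preserved because we only moved points that were already being moved toward $o(P)$ and the median did not move. I would also note that the remaining top points after the merge can be assumed to lie on $A_{+y}$ by construction. Thus $q\in A_x\cup A_{+y}\cup\{o(Q)\}$ for every $q\in Q$, as claimed, with $r(Q,o(Q),c)\ge r(P,o(P),c)$, and the same argument verbatim works when the prediction is fixed at $(0,0)$ since the phantom points sit at the origin either way and play no role in this $y$-axis surgery.

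The main obstacle I anticipate is bookkeeping the point-count conditions under the simultaneous symmetric move: one must argue that a single continuous motion can keep $y_o+y_1=y_2-y_o$ while the optimal location $o(P)$ itself drifts along the $y$-axis (since moving mass changes the weighted geometric median on the line), so ``move toward $o(P)$'' is really ``move toward a moving target.'' The clean way around this, which I would adopt, is to first fix $o(P)$ by a preliminary normalization (using Lemma~3's machinery to push all $y$-axis mass except the two clusters into $o(P)$ is not available here since we are in CA not OA, so instead) observe that on a line the weighted $1$-median is determined by a weighted-majority condition, and as long as the counts and the symmetry are maintained the median of the combined multiset stays at $y=0$ and $o(P)$ stays put at $y=y_o$; only the \emph{costs}, not the \emph{locations}, change. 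Making that invariance rigorous is the crux.
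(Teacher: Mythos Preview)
Your approach diverges from the paper's, and it carries a genuine gap. In a CA instance the optimal point satisfies $y_{o(P)}\ge x_{o(P)}>0$, so $o(P)$ is \emph{not} on the $y$-axis; your identification ``$o(P)=(0,y_o)$'' and the subsequent import of the Lemma~3 (OA) case analysis are therefore illegitimate. Once $x_o>0$, moving a $y$-axis point toward the $y$-projection of $o(P)$ no longer changes the two costs at the same rate: an $\epsilon$-step of the bottom point toward the origin decreases $C^u((0,0),\cdot)$ by $\epsilon w$ but decreases $C^u(o(P),\cdot)$ only by $\epsilon w\cdot\frac{y_o+y_1}{\sqrt{x_o^2+(y_o+y_1)^2}}<\epsilon w$, and the analogous discrepancy holds for the top point. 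With both numerator and denominator dropping but the numerator dropping faster, the ratio $r$ can \emph{decrease} (concretely whenever $r<\sqrt{1+x_o^2/(y_o+y_1)^2}$, which is compatible with the CA constraints). Your displayed infinitesimal change $\epsilon(\sum_{\text{top}}w_i-\sum_{\text{bottom}}w_i)$ for $C^u(\cdot,(0,0))$ is also wrong in sign: both clusters move toward the origin in the first phase, so the change is $-\epsilon(\sum_{\text{top}}w_i+\sum_{\text{bottom}}w_i)$. Finally, once the bottom cluster crosses $y=0$ the $y$-median of the augmented multiset need not stay at $0$, contrary to your ``point counts on each side fixed'' claim; the obstacle you flag at the end (drift of $o(P)$) is not the binding one.

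The paper's proof avoids all of this with a different geometric move: it sends each point $(0,-y_1)$ to $(-y_1,0)$, a quarter-turn about the origin. This keeps $d(\cdot,(0,0))$ exactly fixed and, because $y_o\ge x_o$, one checks $x_o^2+(y_o+y_1)^2\ge (x_o+y_1)^2+y_o^2$, so $d(\cdot,o(P))$ weakly decreases; hence the numerator is unchanged, the denominator weakly drops, and $r$ weakly rises. A short counting argument shows $f$ still outputs $(0,0)$ after the move, and a case split on whether $o(P_1)=o(P)$ (and whether $x_1=y_1$) either yields a strict improvement or lets one iterate until $A_{-y}$ is empty. The key idea you are missing is precisely this reflection from $A_{-y}$ to $A_{-x}$, which exploits the CA asymmetry $y_o\ge x_o$ rather than trying to slide mass along the $y$-axis.
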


\begin{proof}
Assume that \( P \in P^C_{ca}(c) \) is a multiset containing \( n \) points, and there exists \( p \in P \) such that \( p \in A^<_{-y} \).

According to the property of CA, for any \( p \in P \), we have:
\begin{itemize}
    \item \( p \in \{(-x_1, 0), (x_2, 0), (-y_1, 0), (y_2, 0), o(P)\} \),
    \item \( |\{p \mid p \in A_{-x}\}| < |\{p \mid p \in A_{+x} \cup \{o(P)\}\}| \),
    \item \( |\{p \mid p \in A_{-y}\}| < |\{p \mid p \in A_{+y} \cup \{o(P)\}\}| \).
\end{itemize}

Move the point \( p = (0, -y_1) \) to \( (-y_1, 0) \) to obtain a new point set \( P_1 \). Although \( o(P_1) \) may not satisfy \( y_{o(P_1)} \ge x_{o(P_1)} \ge 0 \), we still have \( f(P_1, o(P_1), c) = (0, 0) \), as shown below:

Since \( f(P, o(P), c) = (0, 0) \), in order for the CMP algorithm to produce this result, we require:
\[
|\{p \mid p \in A_{+x} \cup \{o(P)\}\}| + cn \le \frac{(1+c)n}{2}.
\]
Even if the \( cn \) optimal predicted points move to the left of the \( y \)-axis with \( o(P_1) \), it still holds that
\[
|\{p \mid p \in A_{-x}\}| + cn < \frac{(1+c)n}{2},
\]
so the CMP result in the \( x \)-axis is still 0; the same holds for the \( y \)-axis.

Thus,
\[
C^U(f(P, o(P), c), P) = C^U(f(P_1, o(P_1), c), P_1).
\]

Since \( y_{o(P)} \ge x_{o(P)} \ge 0 \), by the inequality of means:
\[
d((0, -y_1), o(P)) \ge d((-y_1, 0), o(P)),
\]
we have
\[
C^U(o(P_1), P_1) \le C^U(o(P), P_1) \le C^U(o(P), P).
\]

\begin{itemize}
    \item If \( o(P_1) \ne o(P) \), let \( Q = P_1 \); clearly, \( r(Q, o(Q), c) > r(P, o(P), c) \).
    \item If \( o(P_1) = o(P), x_1 \ne y_1 \), then by Lemma 4.10 in \cite{agrawal2022learning}, there exists \( Q \) such that\\ \( r(Q, o(Q), c) > r(P, o(P), c) \).
    \item If \( o(P_1) = o(P), x_1 = y_1 \), then
    \[
    d((0, -y_1), o(P)) = d((-y_1, 0), o(P)),
    \]
    hence
    \[
    C^U(o(P_1), P_1) = C^U(o(P), P_1) = C^U(o(P), P).
    \]
    We can continue to move each \( p \in A_{-y} \) in the same way to obtain \( Q \), satisfying
    \[
    Q \in P^C_{ca}(c),\quad r(Q, o(Q), c) \ge r(P, o(P), c),
    \]
    and for all \( q \in Q \),
    \[
    q \in A_x \cup A_y \cup \{o(Q)\}.
    \]
\end{itemize}
\end{proof}

\begin{lemma}[\cite{agrawal2022learning}]
For any confidence parameter $c \in [0,1)$ and any point set $P \in P^C_{ca}(c)$ with $p \in A_x \cup A_{+y} \cup \{o(P)\}$ for all $p \in P$, it holds that:
\[
|\{p \in P \mid p \in A_{-x}\}| \ge |\{p \in P \mid p = o(P)\}|.
\]
\end{lemma}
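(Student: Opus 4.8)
## Proof Proposal

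The plan is to reason directly about the structure forced by the CMP mechanism when the output is the origin. Fix a point set $P \in P^C_{ca}(c)$ satisfying the stated cleanliness condition, so every point lies in $A_x \cup A_{+y} \cup \{o(P)\}$, and write $n = |P|$, $n_+ = |\{p \in P : p = o(P)\}|$ (the copies of the optimum, which sit in the top-right quadrant). First I would recall that for CMP to return $(0,0)$ on the $y$-coordinate, the augmented multiset $P' = P \cup m\ast\{\hat o\}$ with $m = \lfloor cn \rfloor$ must have its $y$-median equal to $0$; since in the consistency setting $\hat o = o(P)$ has $y_{o(P)} > 0$, all $m$ phantom points sit strictly above the $x$-axis, together with the $n_+$ genuine optimum copies. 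Combined with the CA property that strictly fewer points lie in $A_{-y}$ than in $A_{+y} \cup \{o(P)\}$, this pins down how the mass must be distributed across the three $y$-regions (below axis, on axis, above axis).

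Next I would translate the ``$y$-median is $0$'' requirement into a counting inequality. The median of the $n+m$ values in $P'$ is $0$ only if at most $\lfloor (n+m)/2 \rfloor$ of them are strictly positive and at most that many are strictly negative (with the tie-breaking convention that the smaller median is returned, giving the precise threshold). The strictly positive $y$-values include all $m$ phantom points plus the $n_+$ optimum copies plus the points in $A_{+y}$, so we get $m + n_+ + |A_{+y}| \le \lfloor (n+m)/2 \rfloor$, whence $n_+ + |A_{+y}| \le \lfloor (n+m)/2 \rfloor - m \le (n-m)/2 \le n/2$. Symmetrically, arguing about the $x$-coordinate and using the phantom points' $x$-position, I would extract a bound on $|A_{-x}|$ from below: because the output $x$-coordinate is $0$ while $o(P)$ has $x_{o(P)} > 0$, the $m$ phantoms and the $n_+$ optimum copies all contribute strictly positive $x$-values, so the points with negative $x$-value (i.e. $A_{-x}$) must be numerous enough to balance the median — specifically $|A_{-x}| \ge$ (number of strictly-positive-$x$ points) $- $ (a gap of at most one from the median convention) $\ge m + n_+ - O(1)$, and since $m \ge 0$ this already forces $|A_{-x}| \ge n_+$ once the off-by-one from even/odd parity is handled, which is exactly the claim.

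The main obstacle I anticipate is the bookkeeping around the median tie-breaking and the floor function $m = \lfloor cn \rfloor$: the inequalities are all ``off by at most one,'' and the claim $|A_{-x}| \ge n_+$ is tight, so I cannot afford to be sloppy about whether $n+m$ is even or odd or about which of two medians the mechanism returns. I would handle this by casing on the parity of $n + m$ and invoking the stated convention (smaller of the two medians) explicitly in each case, verifying that the ``smaller median equals $0$'' condition forces the strict-positive count to be at most $\lceil (n+m)/2 \rceil - 1$ rather than $\lfloor (n+m)/2 \rfloor$, which is precisely what makes the final bound come out with the correct constant. A secondary subtlety is that points of $P$ may coincide with the phantom location $o(P)$, so I must be careful to count the $n_+$ optimum copies as genuine members of $P$ (contributing to $n$) while the $m$ phantoms are extra; conflating these would shift the inequality. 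Once the parity analysis is pinned down, the argument is a short chain of counting inequalities with no geometry beyond the sign of the coordinates of $o(P)$.
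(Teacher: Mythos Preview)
The paper does not actually prove this lemma; it is quoted verbatim from Agrawal et al.\ and only cited. So there is no in-paper argument to compare against, and your task was really to reconstruct the original proof.

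Your proposal contains a genuine gap. The step where you write that, because the $x$-median is $0$, ``the points with negative $x$-value must be numerous enough to balance the median --- specifically $|A_{-x}| \ge (\text{number of strictly-positive-}x\text{ points}) - O(1)$'' is false as stated. The $x$-median of the augmented multiset being $0$ only forces each of the two strict sides to contain at most about half of the $n+m$ points; it does \emph{not} force the negative side to be at least as large as the positive side minus a constant. The slack between the two sides is absorbed by the points with $x$-coordinate exactly $0$, namely the points in $A_{+y}$ (and any points at the origin). If there were many such zero-$x$ points, you could have $|A_{-x}|=0$ while still having $m+n_+$ strictly positive $x$-values and median $0$.

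The repair is to combine \emph{both} coordinate constraints with the total count. From the $x$-median you get $|A_{+x}| + n_+ + m \le \lfloor (n+m)/2 \rfloor$, and from the $y$-median (which you did derive) you get $|A_{+y}| + n_+ + m \le \lfloor (n+m)/2 \rfloor$. Since by hypothesis every point lies in $A_{-x}\cup A_{+x}\cup A_{+y}\cup\{o(P)\}$ (there are no $A_{-y}$ points), summing and subtracting from $n = |A_{-x}| + |A_{+x}| + |A_{+y}| + n_+ + n_0$ gives $|A_{-x}| \ge m + n_+ - n_0$, where $n_0$ counts any points at the origin. To finish you still need to argue $n_0 \le m$ (in particular, $n_0=0$), and this is where the CA cluster structure (condition 4(a), placing all points at five specific nonzero locations) or the ``no move towards opt'' criticality condition must be invoked --- an ingredient entirely absent from your plan. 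Without it the parity bookkeeping you worry about in the last paragraph cannot close the argument.
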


\begin{lemma}
For any confidence $c \in [0,1)$ and any point set $P \in P^C_{ca}(c)$ with all $p \in A_x \cup A_{+y} \cup \{o(P)\}$, there exists a point set $Q$ such that $r(Q, o(Q), c) > r(P, o(P), c)$ or $Q \in P^C_{ca}(c)$ and $r(Q, o(Q), c) \ge r(P, o(P), c)$.
\end{lemma}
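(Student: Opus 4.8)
The plan is to carry out the last step of the reduction chain: take the ``canonical'' CA instance produced by the earlier lemmas --- every agent sitting at one of $(-x_1,0)$, $(x_2,0)$, $(0,y_2)$, or $o(P)$, with $y_o(P)\ge x_o(P)>0$ --- and push it to a weighted COA instance (Definition~\ref{def:COAW}) without ever decreasing $r(P,o(P),c)$, with a strict increase unless $P$ is already of the target form. Two features separate such a $P$ from a COA instance: a cluster of agents on the strictly positive $y$-axis that do not lie at $o(P)$, and the fact that $o(P)$ need not lie on the $y$-axis (equivalently, the two $x$-clusters are symmetric only about $x_o(P)$, not about the origin). I would first re-balance the two $x$-clusters by the AM--GM argument already used in the reduction lemma that expresses consistency and robustness as maxima over OA and CA instances: sort the $x$-agents by weight, place the heavier half on one side and the lighter half on the other, keeping $\sum_{q_i\in A_x}|x_i|w_i$ unchanged; this leaves $C^u(Q,(0,0))$ fixed and can only lower the optimal cost, so it does not decrease $r$, and it lets me drive the configuration toward one symmetric about the $y$-axis, for which the weighted $1$-median $o(Q)$ lands on the $y$-axis. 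Then I would absorb the $(0,y_2)$ cluster into $o(Q)$ by sliding each of its agents along the $y$-axis toward $o(Q)$, reusing the three-case analysis from the OA-to-COA lemma: if $y_2>y_o$ the slide lowers $C^u(o(Q),Q)$ and $C^u(f(Q),Q)$ by equal amounts, and since $C^u(o(Q),Q)<C^u(f(Q),Q)$ the ratio strictly increases; if $0\le y_2<y_o$ the slide lowers $C^u(o(Q),Q)$ while raising $C^u(f(Q),Q)$, so again $r$ increases.

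The point requiring care is that both coordinate medians must stay pinned at $(0,0)$ throughout. For the $y$-axis slides this is immediate once each slide stays on the $y$-axis and stops at $o(Q)$, but for the re-balancing of the $x$-clusters --- and for any intermediate configuration in which the optimum cluster or migrating agents acquire a nonzero $x$-coordinate --- one must count. This is exactly the purpose of the preceding counting lemma, $|\{p\in P:p\in A_{-x}\}|\ge|\{p\in P:p=o(P)\}|$, together with the ``less points in left/bottom'' clauses of the CA definition: they guarantee that even after the $\lfloor cn\rfloor$ phantom copies are taken into account --- these copies sit at $o(P)$ in the consistency reading $\hat o=o(P)$, and at the origin in the robustness reading $\hat o=(0,0)$ --- the counts on either side of each axis still leave $0$ as the reported median of the $n+\lfloor cn\rfloor$ values, so $f(Q,\hat o,c)=(0,0)$ is preserved and the numerator $C^u(f(\cdot),\cdot)$ changes only in the controlled way above. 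Applying the worst-case-weight claim (Theorem~\ref{wcoa}) to the resulting COA instance then fixes the ratio and identifies it with the closed-form consistency bound; since every move is weak or strict as claimed, either $Q$ is the desired COA instance with $r(Q,o(Q),c)\ge r(P,o(P),c)$ or some move was strict and $r(Q,o(Q),c)>r(P,o(P),c)$.

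I expect the main obstacle to be the interaction between the two distinct ``centres'' in play: the coordinate median is an unweighted order statistic that must remain exactly at the origin, while $o(Q)$ is the weighted $1$-median that must be brought onto the $y$-axis, and a move reshaping one of them generically displaces the other. Certifying that the re-balancing of the $x$-clusters can be chosen so as to place $o(Q)$ on the $y$-axis while simultaneously keeping the unweighted median at $0$ --- rather than merely keeping it at $0$ for a configuration whose weighted centre is still off-axis --- is the delicate geometric part, and the accompanying count must be checked uniformly for the consistency and robustness settings and in the boundary cases ($n+\lfloor cn\rfloor$ even, ties at the median, an agent landing exactly on an axis). Once the medians are known to be stationary, the cost bookkeeping for the individual slides is routine.
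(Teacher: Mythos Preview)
Your plan misses the one move this lemma is actually about. Look at what distinguishes the hypothesis of this lemma from that of the next: here every agent sits in $A_x\cup A_{+y}\cup\{o(P)\}$, and the goal is to produce a $Q$ in which no agent sits at $o(Q)$, i.e.\ every agent lies in $A_x\cup A_{+y}$. The cluster at $o(P)=(x_o,y_o)$ with $x_o>0$ is the only off-axis cluster, and neither of your two moves touches it. In particular, your claim that rebalancing the $x$-axis clusters symmetrically about the origin ``lets me drive the configuration toward one symmetric about the $y$-axis, for which the weighted $1$-median $o(Q)$ lands on the $y$-axis'' is false as long as agents remain at $o(P)$: those agents have strictly positive $x$-coordinate and pull the weighted $1$-median to the right regardless of how the $A_x$ clusters are arranged. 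Your second move (sliding $(0,y_2)$-agents along the $y$-axis) likewise leaves the $o(P)$-cluster untouched.

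The paper's device is a \emph{paired} exchange that your proposal does not contain. One agent $p_{i_2}=o(P)$ is sent to $q_{i_2}=(0,\sqrt{x_o^2+y_o^2})$ on the $y$-axis, preserving its distance to the origin and hence its contribution to $C^u(f,\cdot)$; simultaneously one agent $p_{i_1}=(-x_1,0)$ is reflected about $x=x_o$ to $q_{i_1}=(x_1+2x_o,0)$. The counting lemma $|\{p\in A_{-x}\}|\ge|\{p=o(P)\}|$ guarantees a partner is available for every $o(P)$-agent, and the ``less points in left'' clause keeps the unweighted $x$-median at $0$ after trading a left point for a right point. The bookkeeping is then: the mechanism cost rises by $2x_o\,w_{i_1}$ from the reflection, while the optimal cost rises by at most $d(q_{i_2},o(P))\,w_{i_2}\le\sqrt{2}\,x_o\,w_{i_2}$ (using $y_o\ge x_o$). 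Assuming the current ratio already exceeds $\sqrt{2}$ (otherwise one is below the COA bound and done), a direct algebraic comparison shows $r(Q)\ge r(P)$. Iterating empties the $o(P)$-cluster. Your AM--GM rebalancing and $y$-axis slides belong to the earlier reduction (to CA form) and to the subsequent lemma (from $A_x\cup A_{+y}$ to COA), not to this step.
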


\begin{proof}
From Lemma 5, we know
$|\{p \in P \mid p \in A_{-x}\}| \ge |\{p \in P \mid p = o(P)\}|$,
so we can iteratively move a point from $o(P)$ to the $y$-axis and a point from $A_{-x}$ to $A_{+x}$, until no point remains on $o(P)$.

We construct a new point set $Q = \{q_1, q_2, ..., q_n\}$ where:
- For some $x \ge 0$, $p_{i_1}=(-x, 0)$ is moved to $q_{i_1}=(x + 2x_o(P), 0)$;
- $p_{i_2} = (x_o(P), y_o(P))$ is moved to $q_{i_2} = (0, \sqrt{x_o(P)^2 + y_o(P)^2})$;
- For all $i \neq i_1, i_2$, set $q_i = p_i$.

Since $f(P, o(P), c) = (0, 0)$ and $P \in P^C_{ca}(c)$, we know
$|\{p \in A_{-x}\}| < |\{p \in A_{+x} \cup \{o(P)\}\}|$ and similarly in the $y$-axis, so
$|\{p \mid x_p < 0\}| < \frac{(1-c)n}{2}$, otherwise the CMP output would shift right.

Even if $o(Q)$ lies left of the $y$-axis, this inequality ensures $f(Q, o(Q), c) = (0, 0)$.

Let $A = \sum_{i \neq i_1} d(p_i, f(P, o(P), c)) w_i$ and $B = \sum_{i \neq i_2} d(p_i, o(P)) w_i$.

Assume $A + d(p_{i_1}, f(P, o(P), c)) w_{i_1} \le \sqrt{2} B$, and that
$d(q_{i_2}, o(P)) \le \sqrt{2} x_o(P)$ since $y_o(P) \ge x_o(P)$.

We then have:
\[
(A + d(p_{i_1}, f) w_{i_1}) d(q_{i_2}, o) w_{i_2} \le 2B x_o(P) w_{i_2} = B (d(q_{i_1}, f) - d(p_{i_1}, f)) w_{i_2}.
\]
Therefore:
\[
r(P, o(P), c) = \frac{A + d(p_{i_1}, f) w_{i_1}}{B}
\le \frac{A + d(q_{i_1}, f) w_{i_1}}{B + d(q_{i_2}, o) w_{i_2}}
= r(Q, o(Q), c).
\]

Equality holds only when $o(P) = o(Q)$. Otherwise, $Q$ has strictly higher ratio.
\end{proof}

\begin{lemma}
For any confidence $c \in [0,1)$ and any point set $P \in P^C_{ca}(c)$ with all $p \in A_x \cup A_{+y}$, there exists $Q \in P^C_{coa}(c)$ with $r(Q, o(Q), c) \ge r(P, o(P), c)$ or a point set $Q'$ such that $r(Q', o(Q'), c) > r(P, o(P), c)$.
\end{lemma}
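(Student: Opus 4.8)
The hypothesis hands us a CA instance $P$ whose every point lies on $A_x \cup A_{+y}$; by the CA cluster structure this means $P$ is supported on the three locations $(-x_1,0)$, $(x_2,0)$, $(0,y_2)$, satisfies the symmetry $x_o + x_1 = x_2 - x_o$ with $o(P) = (x_o, y_o)$ and $y_o \ge x_o > 0$, and obeys the ``fewer points to the left'' and ``fewer points below'' counting inequalities, consistent with $f(P, o(P), c) = (0,0)$. The plan is to bring $P$ into the template of Definition~\ref{def:COAW} through three successive moves, each of which either strictly raises the ratio $r$ — in which case we output that instance as the required $Q'$ and stop — or keeps $r$ unchanged while pushing the instance closer to a genuine COA configuration.

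The first move drives the optimum onto the $y$-axis. Since $x_o > 0$, the instance is not yet symmetric about the $y$-axis, whereas in a worst-case COA instance the optimum sits exactly on it (at the cluster $(0,1)$). I would use the redistribution device already employed in the proofs of the preceding lemmas and in Lemma~4.10 of~\cite{agrawal2022learning}: weight may be shifted between the two $x$-axis clusters while $\sum_{p \in A_x} |x_i| w_i$ — and hence the numerator $C^u((0,0),P)$ — stays frozen, so that any accompanying decrease in the optimal cost $C^u(o(P),P)$ translates into a non-decrease of $r$. Driving the two $x$-clusters toward equal weight and equal distance from the origin forces the weighted geometric median onto the $y$-axis. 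One must check the output stays at $(0,0)$: the $\lfloor cn \rfloor$ phantom points are pinned to the prediction $\hat o = o$, so they drift with the optimum toward the $y$-axis, which can only help the coordinate median stay at $0$, and the CA counting inequalities continue to pin it there. After this move the instance has its output at the origin, its optimum on the positive $y$-axis, and all its points on $A_x \cup A_y$; that is, it belongs to $P^C_{oa}(c)$.

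The second move invokes the OA-to-COA reduction established earlier: applied to the instance just obtained, it returns either a point set of strictly larger ratio, or a member $Q$ of $P^C_{coa}(c)$ with $r(Q, o(Q), c) \ge r(P, o(P), c)$; this simultaneously consolidates all $y$-axis mass onto the optimum and collapses the $x$-axis mass into one symmetric pair of clusters, which is exactly the COA shape. The third move installs the extremal weights: the COA instance just produced need not carry $W_{\max}$ on the agents at the optimum and $W_{\min}$ on the $x$-axis agents, but Theorem~\ref{wcoa} shows that re-weighting a COA instance toward that extreme only increases its approximation ratio, so we may assume it. Rescaling the plane so the optimum becomes $(0,1)$ then puts the instance exactly in the form of Definition~\ref{def:COAW}, with $r(Q, o(Q), c) \ge r(P, o(P), c)$, proving the lemma.

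The main obstacle is the first move, where three things must be argued at once: that the coordinate-median output provably remains at $(0,0)$ after the phantom points drift with the moving optimum; that the optimum genuinely lands on the $y$-axis rather than merely approaching it; and that the optimal cost $C^u(o(\cdot),\cdot)$ does not rise while the $x$-axis clusters are rebalanced. The naive symmetrizations — reflecting the instance across the $y$-axis and doubling the agents, or projecting every agent onto the $y$-axis — fail here, because each moves the numerator and the denominator in directions that cannot be controlled jointly; the correct argument has to keep $\sum_{p \in A_x} |x_i| w_i$ (and thus the numerator) frozen and then establish, through a convexity/rearrangement estimate that now carries the agent weights, that the optimal cost cannot increase. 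This is exactly the step at which the non-uniform weights make the analysis harder than in the unweighted treatment of~\cite{agrawal2022learning}: the symmetric equal-weight $x$-pair that forces $x_o = 0$ is reached only after a weight redistribution that must stay consistent with the frozen value of $\sum_{p \in A_x} |x_i| w_i$.
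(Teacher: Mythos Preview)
Your plan is to first symmetrize the two $x$-axis clusters so that the weighted geometric median slides onto the $y$-axis, landing in $P^C_{oa}(c)$, and then invoke the OA$\to$COA reduction and the extremal-weight theorem. The paper takes a different route: it leaves $o(P)$ off the $y$-axis and instead moves the entire top cluster $U$ from $(0,y_2)$ down to $o(P)$ (which keeps $o$ fixed by a geometric-median invariance), and then compares the resulting ratio directly against the known COA bound $\frac{\sqrt{(1+c)^2 W_{\min}^2 + (1-c)^2 W_{\max}^2}}{(1+c)W_{\min}}$ by an explicit one-variable optimization in a slope parameter $\lambda$. Concretely, the paper computes the drops $\Delta_f$, $\Delta_o$ in the mechanism and optimal costs caused by that move, shows $\Delta_f/\Delta_o$ never exceeds the COA bound, and concludes that either $r(P)$ was already below the bound (so the canonical worst COA instance serves as $Q'$) or the move weakly raises the ratio; only at the very end are points shifted back to the axes.

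The gap in your proposal is precisely the step you yourself flag as ``the main obstacle'': the first move is asserted, not executed. You need to exhibit a redistribution of positions and weights on the $x$-axis that simultaneously (i) keeps $\sum_{p\in A_x}|x_i|w_i$ --- and hence the numerator --- frozen, (ii) drives the weighted geometric median exactly onto the $y$-axis, (iii) does not increase the optimal cost, and (iv) keeps every individual weight inside $[W_{\min},W_{\max}]$ with the agent counts unchanged so that the coordinate median (with its $\lfloor cn\rfloor$ phantom copies of a moving $\hat o$) still outputs $(0,0)$. You rule out the naive symmetrizations but never supply the ``convexity/rearrangement estimate that now carries the agent weights'' that you say is required; without it the reduction to OA is only a hope. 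The paper's approach sidesteps all of this: by moving the $y$-cluster onto $o(P)$ instead of trying to move $o(P)$ onto the $y$-axis, it never has to control a drifting optimum or rebalance weights under constraints, and the whole comparison collapses to checking that a scalar function $g(c,\lambda)$ is nonnegative.
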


\begin{proof}
Assume all $p \in P$ are in $\{(-x, 0), (x', 0), (0, x'')\}$. Without loss of generality, scale so that $x'' = 1$. Let $L$, $R$, and $U$ denote the subsets at $(-x, 0)$, $(x', 0)$, and $(0, 1)$, respectively.

Since $f(P, o(P), c) = (0, 0)$, we have $|L| = cn$ and $|R| = |U| = \frac{1 - c}{2}n$.

Moving all points in $U$ to $o(P)$ to form $Q$, by Corollary 7 in \cite{good2020coordinate}, we have $o(Q) = o(P)$.

Let $Q'$ be the weighted COA instance that reaches
\[
r(Q', o(Q'), c) \ge \frac{\sqrt{(1+c)^2 W^2_{\min} + (1-c)^2 W^2_{\max}}}{(1+c) W_{\min}}.
\]

If $r(P, o(P), c)$ is less than this value, we are done.

By Theorem 3, we assume all points in $U$ have weight $W_{\max}$ and those in $L$ and $R$ have $W_{\min}$.

Let $\Delta_f$ and $\Delta_o$ be the reduction in mechanism and optimal costs due to the movement, respectively. Then
\[
\frac{\sqrt{(1+c)^2 W^2_{\min} + (1-c)^2 W^2_{\max}}}{(1+c) W_{\min}} - \frac{\Delta_f}{\Delta_o} \ge 0.
\]

Let $1-y_o$ be the distance moved in the $y$-direction and $\lambda(1-y_o)$ be the distance moved in the $x$-direction. Then we have:

\[
\Delta_f = \frac{(1 - 3c)n}{2} \lambda(1 - y_o) W_{\min} + \frac{(1 - c)n}{2} (1 - y_o) W_{\max},
\]

\[
\Delta_o = \frac{(1 - c)n}{2} \sqrt{1 + \lambda^2} (1 - y_o) W_{\max}
\].

Define function:
\[
g(c, \lambda) = \frac{\sqrt{(1+c)^2 W^2_{\min} + (1-c)^2 W^2_{\max}}}{(1+c) W_{\min}} - \frac{(1 - 3c)\lambda W_{\min} + (1 - c) W_{\max}}{(1 - c) \sqrt{1 + \lambda^2} W_{\max}}
\]

We compute first and second derivatives and find that when $0 < c < 1/3$, $g$ reaches its minimum at
\[
\lambda = \frac{(1 - 3c) W_{\min}}{(1 - c) W_{\max}},
\]

and when $1/3 \leq c < 1$, $g$ achieves its minimum at
\[
\lambda = 0.
\]

Substituting respectively, we find $g(c) > 0$ always, hence
\[
r(Q, o(Q), c) \ge r(P, o(P), c).
\]

Finally, we can shift all the points in $Q$ to the two axes and construct a COA instance.
\end{proof}

\subsection{Robustness Analysis}

We now analyze the robustness of CMP in the weighted setting. Robustness captures how poorly the mechanism can perform when the prediction is arbitrarily wrong. Following the same weighted COA construction as Definition \ref{def:COAW}, we now assume that the prediction \(\hat{o}\) is at \((0,0)\), simulating the wrong prediction. The CMP mechanism still outputs \((0,0)\).

The agents are arranged as before: some are placed at \((0,1)\), and the rest are split symmetrically on \((x,0)\) and \((-x,0)\). Agents with high weight are assigned to the optimal location \((0,1)\), and those with low weight are set at $x$-axis. This setting ensures that the optimal solution incurs low cost, while the output of mechanism takes higher cost due to its reliance on the wrong prediction.

\begin{theorem}
For CMP with confidence \(c \in [0, 1)\), the worst-case robustness in weighted COA instances is
\[ \frac{\sqrt{(1-c)^2 W_{\min}^2 + (1+c)^2 W_{\max}^2}}{(1-c) W_{\min}}. \]
\end{theorem}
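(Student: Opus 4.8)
The plan is to mirror the consistency analysis almost verbatim, since the weighted COA construction is identical except that the prediction $\hat{o}$ is now placed at $(0,0)$ rather than at the optimal location $o(P)=(0,1)$. First I would write down the approximation ratio of a weighted COA instance where $m$ agents of weight $W_{\max}$ sit at $(0,1)$ and $n-m$ agents of weight $W_{\min}$ split evenly between $(x,0)$ and $(-x,0)$; the numerator of the ratio (the mechanism cost at $(0,0)$) is the same expression $mW_{\max}+(n-m)xW_{\min}$ as before, and the optimal cost at $(0,1)$ is again $(n-m)\sqrt{1+x^2}\,W_{\min}$, so the ratio is $\frac{mW_{\max}+(n-m)xW_{\min}}{(n-m)\sqrt{1+x^2}\,W_{\min}}$. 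The only place the change of prediction enters is in the counting constraint that forces the CMP output to be the origin: with the $m=\lfloor cn\rfloor$ phantom points now at $(0,0)$ instead of $(0,1)$, the number of agents allowed at $(0,1)$ that still keeps the $y$-median at $0$ is larger, namely up to $\frac{(1+c)n}{2}$, while the $x$-axis agents number $\frac{(1-c)n}{2}$ on each side.

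Next I would substitute $m=\frac{1+c}{2}n$ and $n-m=\frac{1-c}{2}n$ (the worst case, maximizing the count of high-weight agents at the optimum) into the ratio, obtaining
\[
\frac{\frac{1+c}{2}nW_{\max}+\frac{1-c}{2}nxW_{\min}}{\frac{1-c}{2}nW_{\min}\sqrt{1+x^2}}=\frac{(1+c)W_{\max}+(1-c)xW_{\min}}{(1-c)W_{\min}\sqrt{1+x^2}}.
\]
Then I would optimize over $x\ge 0$ exactly as in the consistency proof: differentiating with respect to $x$, the numerator of the derivative is proportional to $(1-c)W_{\min}-(1+c)xW_{\max}$, which vanishes at $x^\ast=\frac{(1-c)W_{\min}}{(1+c)W_{\max}}$, and the sign pattern (positive before, negative after) confirms this is the maximizer. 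Plugging $x^\ast$ back in and simplifying the $\sqrt{1+(x^\ast)^2}$ factor yields the claimed bound $\frac{\sqrt{(1-c)^2W_{\min}^2+(1+c)^2W_{\max}^2}}{(1-c)W_{\min}}$. I would also note the sanity checks that the statement invites: at $c=0$ this reduces to $\frac{\sqrt{W_{\min}^2+W_{\max}^2}}{W_{\min}}$, and when $W_{\max}=W_{\min}$ it becomes $\frac{\sqrt{(1-c)^2+(1+c)^2}}{1-c}=\frac{\sqrt{2+2c^2}}{1-c}$, matching the unweighted CMP robustness.

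The routine parts — the derivative computation and the algebraic simplification — are genuinely routine, so the only real content is the counting argument that $\frac{(1+c)n}{2}$ high-weight agents can be placed at $(0,1)$ while the mechanism still outputs the origin. This is where I expect the main (though mild) obstacle: one must verify that with $\lfloor cn\rfloor$ phantom points at the origin and $\frac{(1-c)n}{2}$ agents on each of the positive and negative $x$-axis, the $x$-median of the augmented multiset of $n+\lfloor cn\rfloor$ points is indeed $0$, and that the $y$-median is also $0$ precisely when the number of agents with positive $y$-coordinate does not exceed $\frac{(1+c)n}{2}$ (balanced against the $\lfloor cn\rfloor$ phantom zeros plus the $\frac{(1-c)n}{2}$ axis agents with $y=0$). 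Once that bookkeeping is pinned down — analogously to the argument in the CA/OA lemmas above — the rest follows by the same calculus as the consistency theorem, and I would additionally appeal to the already-established transformation lemmas (stated for both $\hat{o}=o(P)$ and $\hat{o}=(0,0)$) to argue that no non-COA instance does better, so this worst-case COA value is the true robustness of CMP.
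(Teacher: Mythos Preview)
Your proposal is correct and follows essentially the same approach as the paper: maximize the number of weight-$W_{\max}$ agents at $(0,1)$ subject to the median constraint (which now allows $\frac{(1+c)n}{2}$ such agents because the phantom points sit at the origin), substitute into the ratio, and optimize over $x$ to obtain $x^\ast=\frac{(1-c)W_{\min}}{(1+c)W_{\max}}$ and the stated bound. The paper's own proof is in fact slightly terser than yours (it only checks the $y$-median constraint and omits the sanity checks and the remark about non-COA instances), but the argument is identical.
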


\begin{proof}
    To make approximation ratio as big as possible, we should maximize the amount of agents in $(0, 1)$. To guarantee the CMP output $(0, 0)$, there should be $\frac{(1+c)n}{2}$ agents report $(0, 1)$ at most. In this instance, the approximation ratio is 
    \begin{equation*}
        \frac{\frac{1-c}{2}nxW_{\min}+\frac{1+c}{2}nW_{\max}}{\frac{1-c}{2}nW_{\min}\sqrt{1+x^2}} = \frac{\left(1-c\right)xW_{\min}+\left(1+c\right)W_{\max}}{\left(1-c\right)W_{\min}\sqrt{1+x^2}}.
    \end{equation*}

    Take the first derivative with respect to $x$, we have
    \begin{equation*}
        \frac{\left(1-c\right)W_{\min}-\left(1+c\right)xW_{\max}}{\left(1-c\right)\left(1+x^2\right)^{3/2}W_{\min}}.
    \end{equation*}

    Let the numerator be $0$, we get $x = \frac{\left(1-c\right)W_{\min}}{\left(1+c\right)W_{\max}}$. For any $x < \frac{\left(1-c\right)W_{\min}}{\left(1+c\right)W_{\max}}$, the numerator is positive, and for any $x > \frac{\left(1-c\right)W_{\min}}{\left(1+c\right)W_{\max}}$, the numerator is negative, therefore when $x = \frac{\left(1-c\right)W_{\min}}{\left(1+c\right)W_{\max}}$, the approximation ratio reaches its maximum value.

    Therefore, the upper bound of robustness is $\frac{\sqrt{(1-c)^2W^2_{\min}+(1+c)^2W^2_{\max}}}{(1-c)W_{\min}}$.
\end{proof}

This result looks symmetric to the consistency bound, but with the roles of \(c\) and \(-c\) reversed, illustrating the trade-off: as the mechanism places more trust in the prediction (larger \(c\)), consistency improves while robustness deteriorates.

\subsection{Trade-off Between Consistency and Robustness}

Our previous results show that consistency and robustness cannot be optimized at the same time. As the confidence parameter increases, the mechanism relies more on the predicted location, resulting in improved performance when the prediction is accurate. However, this same reliance can lead to degraded performance when the prediction is far from the true optimum.

To make this trade-off concrete, consider a simple instance in $\mathbb{R}^2$ with three agents. One agent with weight $4$ is located at $(0,1)$, while two agents with weight $1$ are placed at $(-1,0)$ and $(1,0)$. The optimal location that minimizes the utilitarian social cost is clearly $(0,1)$, where the highest-weight agent locates.

If the prediction is accurate, $\hat{o} = (0,1)$, then the phantom points introduced by CMP is the optimal point. The larger $c$ is, the more phantom points are introduced, and the mechanism outputs a facility location closer to $(0,1)$. Or more specifically, when $0 < c < 1/3$, no phantom points are introduced, the CMP outputs $(0,0)$, and the consistency is $\frac{3}{\sqrt{2}}$; when $1/3 \leq c < 2/3$, only one phantom point is introduced, the CMP also outputs $(0,0)$, and the consistency is $\frac{3}{\sqrt{2}}$; when $2/3 \leq c < 1$, two phantom points are introduced, the CMP outputs $(0,1)$, and the consistency is $1$. In this case, the consistency decreases when $c$ get larger.

In contrast, if the prediction is wrong, we consider two cases: $\hat{o} = (0,-10)$ and $\hat{o} = (0,1/2)$. In the former case, the CMP always outputs $(0,0)$, which means CMP has a bottommost performance guarantee when the prediction is extremely wrong. In the latter case, the CMP outputs $(0,0)$ when $0 < c < 2/3$, and $(0,1/2)$ when $2/3 \leq c < 1$, in which case the robustness decreases to $\frac{2+\sqrt{5}}{2\sqrt{2}}$. It shows that when the prediction is slightly wrong, CMP can still provide a meaningful robustness guarantee as long as the confidence is high enough.

This example highlights the trade-off: placing more confidence on the predictions improves consistency under accurate predictions, but may worsen robustness when predictions are wrong. No fixed value of $c$ can optimize both objectives simultaneously, showing a limitation of deterministic mechanisms in the weighted setting.

In the next section, we formalize this limitation and prove that no deterministic mechanism can simultaneously achieve perfect consistency and bounded robustness in the weighted setting.

\subsection{Impossibility of Simultaneous Optimal Trade-off}

We now demonstrate a fundamental impossibility result: \emph{No mechanism can simultaneously achieve perfect consistency and bounded robustness} in the weighted facility location problem with prediction. 

Generalized Coordinate-wise Median (GCM) mechanism proposed by Procaccia et al. \cite{procaccia2013approximate} assumes that the locations are in the (low-dimensional) Euclidean space and computes the median of the reported locations along each coordinate. 

With Corollary 3 in \cite{peters1993range}, we know that any deterministic, strategyproof, anonymous, and unanimous mechanism can be expressed as a GCM mechanism with $n-1$ constant points in $P$. 

We construct two specific scenarios in $\mathbb{R}^2$, where all agents lie on the vertical line $x = 0$:

 \emph{Instance A:}
         4 agents at $(0,10)$, each with weight $1$;
         1 agent at $(0,20)$, with weight $5$.
    
\emph{Instance B:}
        4 agents at $(0,10)$, each with weight $5$;
        1 agent at $(0,20)$, with weight $1$.

In Instance A, to reach $1$-consistency, all the $4$ constant point must be at $(x, y)$ where $y > 20$, otherwise, the mechanism would choose $(0, 10)$ and consistency would be $\frac{5}{4}$.

However, in Instance B, the mechanism chooses $(0, 20)$, the robustness become:
\[
\frac{200}{10} = 20.
\]

We can easily imagine that this ratio grows \emph{linearly} with the $\frac{W_{\max}}{W_{\min}}$. More generally, if the agent at $(0,20)$ has weight $W_{\min}$ and the others have weight $W_{\max}$, the robustness becomes:
\[
\frac{(n-1) \cdot W_{\max} \cdot d}{W_{\min} \cdot d} = (n-1) \cdot \frac{W_{\max}}{W_{\min}}.
\]

This example is exactly what Theorem \ref{thm2} says: \emph{There is no deterministic and strategyproof mechanism in the weighted FLP with predictions that is $1$-consistent and  $O\left( n \cdot \frac{W_{\max}}{W_{\min}} \right)$-robust.} It demonstrates that requiring consistency $= 1$ inevitably leads to unbounded robustness as $\frac{W_{\max}}{W_{\min}} \to \infty$. Therefore, no deterministic mechanism can simultaneously ensure perfect consistency and bounded robustness in the presence of adversarial prediction error.

\paragraph*{Acknowledgement.}The work of Yangguang Shi was partially supported at Shandong University by the Science Fund Program of Shandong Province for Distinguished Oversea Young Scholars (Grant No. 2023HWYQ-006), by the National Natural Science Foundation of China (Grant No. 62302273), and by the Shandong Province Sponsored Overseas Study Program (Grant No. 10000082163106).

\bibliographystyle{plain}
\bibliography{mybibliography}

\end{document}